\pgfplotsset{compat=1.14}
\newtheorem{definition}{Definition}[section]
\newtheorem{assumption}{Assumption}[section]
\newtheorem{theorem}{Theorem}[section]
\newtheorem{proposition}{Proposition}[section]
\newtheorem{remark}{Remark}[section]
\newtheorem{lemma}{Lemma}[section]
\newtheorem{corollary}{Corollary}[section]
\newcommand{\rmref}[0]{\mathrm{ref}}
\newcommand{\lref}[0]{\lambda_{\rmref}}
\newcommand{\tref}[0]{\tau_{\rmref}}
\newcommand{\rmbounce}[0]{\mathrm{bounce}}
\newcommand{\lbounce}[0]{\lambda_{\rmbounce}}
\newcommand{\tbounce}[0]{\tau_{\rmbounce}}
\newcommand{\ppp}[0]{\mathrm{pp}}
\newcommand{\PPP}[0]{\mathrm{PP}}
\title{
    Debiasing Piecewise Deterministic Markov Process samplers using couplings}
\author[1]{
  Adrien Corenflos}
\author[2]{
  Matthew Sutton}
\author[3]{
  Nicolas Chopin}
\affil[1]{Department of Statistics, University of Warwick, \href{adrien.corenflos@warwick.ac.uk}{adrien.corenflos@warwick.ac.uk}}
\affil[2]{The University of Queensland, \href{m.sutton2@uq.edu.au}{matt.sutton@qut.edu.au}}
\affil[3]{ENSAE Paris, Institut Polytechnique de Paris, \href{nicolas.chopin@ensae.fr}{nicolas.chopin@ensae.fr}}
\date{\today}
\begin{document}

    \maketitle
    \begin{abstract}
        Monte Carlo methods --- such as Markov chain Monte Carlo (MCMC) and piecewise deterministic Markov process (PDMP) samplers --- provide asymptotically exact estimators of expectations under a target distribution. There is growing interest in alternatives to this asymptotic regime, in particular in constructing estimators that are exact in the limit of an infinite amount of computing processors, rather than in the limit of an infinite number of Markov iterations. In particular,~\citet{Jacob2020unbiasedMCMC} introduced coupled MCMC estimators to remove the non-asymptotic bias, resulting in MCMC estimators that can be embarrassingly parallelised. In this work, we extend the estimators of~\citet{Jacob2020unbiasedMCMC} to the continuous-time context and derive couplings for the bouncy, the boomerang, and the coordinate samplers. Some preliminary empirical results are included that demonstrate the reasonable scaling of our method with the dimension of the target.
    \end{abstract}

\section{Introduction}\label{sec:introduction}
    
Markov chain Monte Carlo (MCMC) methods are routinely employed to compute
expectations with respect to a probability distribution $\tilde{\pi}(x)$, which
is typically only known up to proportionality, i.e., 
$\tilde{\pi}(x) = \pi(x) / \mathcal{Z}$, where $\mathcal{Z}$ is an intractable 
normalisation constant. At heart, these samplers rely on simulating a Markov chain
$(X_k)_{k=0}^{\infty}$ that keeps the target distribution $\pi$
invariant~\citep[see, e.g.][]{robert2004monte}. Under technical assumptions,
these chains can be shown to verify an ergodic theorem, i.e., $N^{-1}\sum_{k=0}^{N-1}
h(X_k) \rightarrow \pi(h)$  as $N\rightarrow \infty$, almost surely, and
central limit theorems quantifying the deviation to the true mean for a finite
time. In fact, explicit rates of convergence with respect to $N$ are
available~\citep[Ch. 5]{Douc2018MC} and some recent non-asymptotic contraction results have started to emerge~\citep{andrieu2022poincar}. 

MCMC methods have enjoyed a large success since their inception in~\citet{metropolis1953equation} due in part to their wide applicability. A number of improvements on the general method~\citep{Hastings1970mcmc} have subsequently been proposed, the most important ones perhaps being Gibbs samplers~\citep{geman1984stochastic}, that rely on conditional tractability, and gradient-based methods~\citep{duane1987HMC}. However, an issue typically shared by most instances of these methods is their diffusive nature, coming from the fact they verify the detailed balance condition, or, otherwise said, the Markov chain they construct is reversible~\citep[Section 6.5.3]{robert2004monte}. This reversibility implies they explore space at a suboptimal rate (informally, requiring $K^2$ steps to explore $K$ states) compared to their ideal behaviour. Following early works on non-reversibility~\citep{diaconis2000nonrev}, several efforts have been directed at deriving more efficient samplers, relying on the global balance condition only~\citep[Definition 6.3.5]{robert2004monte}, or more precisely, on skewed versions of the detailed balance condition~\citep{chen1999lifting} that work by introducing a momentum to the sampling procedure by means of an auxiliary variable.

In line with this, another class of samplers, called piecewise deterministic Markov processes~\citep[PDMPs,][]{Bouchard2018BPS,bierkens2019zig,durmus2021piecewise} has started to gain attention. 
These samplers can typically be understood as the continuous-time limit of otherwise well-known Markov chains~\citep{peters2012rejection, michel2014generalized, fearnhead2018piecewise}. Formally, they rely on simulating a Markov process $t \mapsto X_t$, ergodic with respect to a target distribution $\pi$. This is usually done by augmenting the space with an auxiliary velocity component $V_t$, such that the augmented target distribution is $\pi(x, v) = \pi(x) \varphi(v)$.
Under similar hypotheses as for the discrete case, ergodic theorems are available: $T^{-1}\int_{0}^{T}
h(X_s)\dd{s} \rightarrow \pi(h)$ as $T \to \infty$. While this formulation may only be useful for integrands $h$ tractable over the PDMP trajectories (typically piecewise polynomial integrands), these samplers can also be discretized to allow for more general test functions $h$, resulting in Markov chains $X_{\Delta k}$, $k=0, 1, \ldots$, for which the standard MCMC theory applies. An attractive characteristic of PDMPs is their non-reversible nature, making them more statistically efficient than their Metropolis--Rosenbluth--Teller--Hastings (MRTH) counterparts~\citep{fearnhead2018piecewise}.

Both MCMC and PDMP samplers provide asymptotically consistent estimates of the integral $\tilde{\pi}(\varphi)$ of interest, in the sense that they are exact
in the regime of an infinite integration time. The picture is
however blurrier in the context of a finite $N$ (or $T$), for which these samplers
exhibit a bias depending on (i) the choice of the initial distribution of
$X_0$, (ii) the construction of the Markov evolution procedure.
Formally, this means that we do not necessarily have
$\mathbb{E}\left[N^{-1}\sum_{k=0}^{N-1} h(X_k)\right] = \pi(h)$ (and similar for the continuous-time system). This issue
hinders parallelisation, indeed, if $\hat{h}_p = N^{-1}\sum_{k=0}^{N-1}
h(X^p_k)$, $p=1, \ldots, P$ are i.i.d. realisations
of the finite sample estimators, their empirical averages $P^{-1} \sum_{p=1}^P \hat{h}_p$ do not converge to $\pi(h)$ and instead capture the finite sample bias.
This issue is typically handled by introducing a burn-in period during which the statistics are not collected: 
$\hat{h}^B_p = (N - B)^{-1}\sum_{k=B}^{N-1} h(X^p_k)$.
The intuition behind this is that, provided that $B$ is large enough, $X_{B}$ will have reached stationarity, i.e., it will be approximately distributed according to $\pi$ so that all subsequent time steps provide almost exact, albeit correlated samples from $\pi$. While sometimes satisfactory in practice, this procedure is still biased, and its bias is hard to quantify. The work of \citet{Jacob2020unbiasedMCMC}, which we review below, is a correction of this procedure for MCMC estimates, and permits computing low-variance bias correction terms in a finite (but random) time. 

The aim of this article is, therefore, to derive unbiased versions of PDMP-based estimators using the framework of~\citet{Jacob2020unbiasedMCMC}, resulting in unbiased estimators. As a consequence, in the remainder of this introduction, we first present the concept of couplings of distributions and Markov chains, which are at the heart of~\citet{Jacob2020unbiasedMCMC}. We then quickly review the framework of~\citet{Jacob2020unbiasedMCMC}; particular attention is paid to explaining why its MRTH and in particular Hamiltonian Monte Carlo (HMC) instances are suboptimal. Finally, we formally introduce PDMP samplers, both to fix notations and to highlight how we may hope that debiasing them will not raise the same issues as MRTH samplers.

\subsection{Couplings in Monte Carlo}\label{subsec:coupling-intro}
    Given two probability distributions $\mu$ and $\nu$ defined on two spaces $\mathcal{X}$ and $\mathcal{Y}$, a coupling $\Gamma$ of $\mu$ and $\nu$ is defined as a probability distribution over $\mathcal{X} \times \mathcal{Y}$ with marginals $\Gamma(\dd{x}, \mathcal{Y}) = \mu(\dd{x})$ and $\Gamma(\mathcal{X}, \dd{y}) = \nu(\dd{y})$. Classical couplings are given by the common random number coupling~\citep{gal1984optimality}, the antithetic variable coupling~\citep{hammersley1956new}, Wasserstein couplings~\citep[see, e.g.][]{villani2009optimal}, and diagonal couplings, in particular maximal couplings~\citep[see, e.g.][]{Thorisson2000Coupling,Lindvall2002lectures}. For instance, the CRN coupling can easily be implemented provided that there exists some common generative uniform random variable $\zeta$ under which we have $X = F^x(\zeta)$ implies $X \sim \mu$ and $Y = F^y(\zeta)$ implies $Y \sim \nu$, in which case the CRN coupling consists in using the same $\zeta$ for simulating both $X$ and $Y$. In this work, we will mostly be interested in diagonal couplings (in particular maximal couplings). 

    Diagonal couplings are defined as couplings which ensure a positive mass on the diagonal: $\mathbb{P}(X = Y) > 0$ as soon as the domains of $X$ and $Y$ have sufficient overlap, and maximal couplings are diagonal couplings that achieve the maximal diagonal mass $\mathbb{P}(X = Y)$. Simulating from diagonal (or maximal) couplings is typically harder than the CRN coupling, although several methods to do so exist~\citep{Thorisson2000Coupling,Bou2020coupling,corenflos2022rejection,Dau2023complexity} with different requirements and properties. Perhaps, the most general algorithm to sample from a maximal coupling is given by Thorisson's algorithm, which only requires that $\mu$ and $\nu$ are defined on the same space $\mathcal{X} = \mathcal{Y}$ and to have a Radon--Nikodym derivative with respect to each other that can be computed. The procedure is given in Algorithm~\ref{alg:coupling-thorisson}.
    \begin{algorithm}[!tbp]
    \caption{Thorisson's maximal coupling}
    \label{alg:coupling-thorisson}
    \DontPrintSemicolon
    Sample $X \sim \mu$ and $V \sim \mathcal{U}([0, 1])$\tcp*[f]{Independently}\;
    \lIfReturn{$V < \nu(X) / \mu(X)$}{
        $(X, X)$
    }\lElse{
        \While{$\mathrm{True}$}{
            Sample $Y \sim \nu$ and $V \sim \mathcal{U}([0, 1])$\tcp*[f]{Independently}\;
            \lIfReturn{$\mu(Y) / \nu(Y) < V$}{
                $(X,Y)$
            }
        }
    }
    \end{algorithm}
    Finally, when the space $\mathcal{X}$ is finite with cardinal $N$, $\mu$ and $\nu$ can be represented by non-negative weights $\mu^n$, $\nu^n$, $n=1, \ldots, N$. It is then easy to form a maximal coupling of $\mu$ and $\nu$ via, for example, Algorithm~\ref{alg:coupling-discrete}. This construction can be found in~\citet{Thorisson2000Coupling,Lindvall2002lectures}.
    \begin{algorithm}[!tbp]
    \caption{Maximal coupling of discrete distributions}
    \label{alg:coupling-discrete}
    \DontPrintSemicolon
    \For{$n=1, \ldots, N$}{
        Set $\alpha_n = \mu_n \wedge \nu_n$
    }
    Compute $\alpha = \sum_{n=1}^N \alpha_n$\;
    \For{$n=1, \ldots, N$}{
        Set $\mu_n = (\mu_n - \alpha_n)/(1-\alpha)$, $\nu_n = (\nu_n - \alpha_n)/(1-\alpha)$, and $\alpha_n = \alpha_n / \alpha$\;
    }
    Sample $U, V \sim \mathcal{U}([0, 1])$\tcp*[f]{Independently}\;
    \uIf{$V < \alpha$}{
        Find the first index $i$ such that $\sum_{n=1}^i \alpha_n> U$\;
        \textbf{return} $(i, i)$
    }\uElse(\tcp*[f]{Sample the residuals with a common uniform}){
        Find the first index $i$ such that $\sum_{n=1}^i \mu_n >U$\;
        Find the first index $j$ such that $\sum_{n=1}^j \nu_n >U$\;
        \textbf{return} $(i, j)$
    }
    \end{algorithm}
    
    Couplings of random variables have traditionally been used as a theoretical technique, either to study properties of the random variables themselves, or more fruitfully, in the study of Markov chains to prove upper bounds on their mixing time~\citep[see, e.g.,][]{Thorisson2000Coupling,Lindvall2002lectures}. However, literature has recently started to emerge where couplings are used as algorithmic procedures rather than pure theoretical tools. They have indeed been used to perform perfect sampling~\citep[known as ``Coupling from the past'']{propp1996exact,huber2016perfect}, for variance reduction~\citep[for example in][]{goodman2009coupling,giles2015mlmc}, or, more recently, as a means to compute unbiased expectation under MCMC simulation~\citep{glynn_rhee_2014,Jacob2020unbiasedMCMC}. We focus on the latter in the next section.
    
\subsection{Unbiased MCMC via couplings}\label{subsec:jacob-intro}
While introducing a burn-in phase to return $\hat{h}^B = (N - B)^{-1}\sum_{k=B}^{N-1} h(X_k)$, rather than $\hat{h}= N^{-1}\sum_{k=0}^{N-1} h(X^p_k)$, empirically reduces the bias incurred by MCMC
estimators, it does not totally remove it. As a consequence, averaging $P$ independent estimators $\hat{h}^B_p \sim \hat{h}^B$, $p=1, \ldots P$
is still ill-defined, in the sense that $P^{-1} \sum_{p=1}^P \hat{h}^B_p \not \to \pi(h)$ for finite $B$ and $N$, as $P$ goes to infinity. In order to correct this,~\citet{Jacob2020unbiasedMCMC}
consider a bias correction term for MCMC samplers, which is defined in terms of
a telescopic sum inspired by the seminal work of~\citet{glynn_rhee_2014,rhee2015unbiased}. For
a large enough burn-in, their bias correction term is shown to be zero with a
high probability, so that the unbiased estimator is almost the same as the
classical one yet can be principally parallelised. This approach has proven
very successful in a number of application and extension papers spanning graph
colouring~\citep{nguyen2022many}, state-space and intractable
models~\citep{middleton2020unbiased,Jacob2020Smoothing}, and high-dimensional
regression models~\citep{biswas2020coupled,biswas2022coupling}. 

Formally, their
estimators are defined in terms of two coupled $\pi$-ergodic chains
$(X_k)_{k=0}^{\infty}$, $(Y_k)_{k=0}^{\infty}$, which are such that 
(a) they have the same marginal distribution 
(for all $k \geq 0$, the law $\mathcal{L}(X_k)$ of $X_k$ is the same as the law
$\mathcal{L}(Y_k)$ of $Y_k$), and 
(b) $X_k$ and $Y_{t-1}$ ``couple'' in finite time (i.e. 
$\tau=\inf\{t: X_{s}=Y_{s-1}\, \forall s\geq t\}$ is finite almost surely). 
Provided $\mathbb{E}[\tau] < \infty$ and under extra assumptions, the following estimate 
may be computed in finite time, is unbiased and has finite variance: 
\begin{equation}
        \label{eq:unbiased-coupling}
        H_{B:N} \coloneqq \frac{1}{N - B}\sum_{k=B}^{N-1} h(X_k) + \sum_{k=B+1}^{\tau - 1} 
        \min\left(1, \frac{k - B}{N - B}\right)\left\{h(X_{k}) - h(Y_{k-1})\right\}.
\end{equation}
(The second sum means zero whenever $\tau-1 < B+1$.)

Constructing such couplings is a subject of active research, and, while some general recipes are available for some classical Markov kernels~\citep{Heng2019couplings,Wang2021coupling}, in the general case, deriving an efficient coupling requires careful examination of the problem at hand.
Furthermore, when a coupling has been derived, it is not trivial to show that it verifies the assumptions of~\citet{Jacob2020unbiasedMCMC}, even when it empirically outperforms theoretically grounded alternatives~\citep[Section 3.2]{biswas2022coupling}.

Additionally, many MCMC methods do not perform satisfactorily without some  calibration (or tuning) of their hyperparameters.
For instance, to get the best out of Hamiltonian Monte Carlo one may need to tune its mass matrix (which defines the geometry of its velocity component), and the step size of  the numerical integrator~\citep[see, e.g.][Section 4.2]{betancourt2017conceptual}. 
Such hyperparameters are typically adapted during the burn-in (or warm-up) phase and then kept constant for the rest of the sampling procedure; however, the correct procedure for combining tuning in MCMC with the coupling method of \citet{Jacob2020unbiasedMCMC} is not obvious to us (see additional discussion in Appendix~\ref{app:tuning}), and there exists, to the best of our knowledge, no literature on the topic. Consequently, most of the positive, if not all, empirical results regarding the method have been obtained for tuning-free MCMC algorithms~\citep[for instance, in][results for the Euclidean metric were obtained using the identity mass matrix]{Heng2019couplings}.
On the other hand, PDMP samplers are either tuning-free algorithms~\citep{bierkens2019zig}, or at least fairly robust to the choice of their hyperparameters~\citep[Supplementary H.3]{Bouchard2018BPS}. This gives an extra incentive to combine the coupling framework of \citet{Jacob2020unbiasedMCMC} with these.

\subsection{Piecewise Deterministic Markov Process (PDMP) samplers}\label{subsec:pdmp-intro}
We now turn to describing PDMP samplers more formally.
PDMP samplers have recently been introduced as a family of non-reversible
continuous-time general-purpose samplers~\citep[see,
e.g.][]{durmus2021piecewise}.  At heart, they are fully defined by a set of
three interacting components: (i) deterministic flows based on 
ordinary differential equations, (ii) random jump events, determining when the
state is modified, (iii) a Markov kernel describing how it is then modified at jump time. 
Formally, for a given process with a state
$Z_t$, this corresponds to the following procedure:
\begin{enumerate}
    \item Jumps happen at random times, following a Poisson process with
        intensity $\lambda_t = \lambda(Z_t)$.
    \item At jump time, the state is refreshed: $Z_{t} \sim Q_t(\cdot \mid
        Z_t)$.
    \item Between jumps, the state evolves according to a flow: $\dot{Z}_t =
        \phi(t, Z_t)$.
\end{enumerate}
This algorithm is fully described by the infinitesimal generator of the PDMP,
which also determines the stationary distribution of $Z$ when it
exists~\citep{davis1984piecewise}.

    This property has made the introduction of PDMP samplers possible, where a PDMP is constructed specifically to target a given distribution of interest.  
    Successful instances of the method include the Zig-Zag
    sampler~\citep[ZZS,][]{bierkens2019zig}, the coordinate
    sampler~\citep[CS,][]{wu2019coordinate}, the bouncy particle
    sampler~\citep[BPS,][]{Bouchard2018BPS}, and the boomerang sampler~\citep[BS][]{bierkens2020boomerang}. In this work, we focus on the latter three methods. Some discussion is provided in the conclusion of the article as to why ZZS is harder to handle than the other samplers. 
    
Interesting features of the methods listed above are that they are all
essentially tuning-free algorithms, and that do not suffer from poor conditioning the way MCMC methods typically do. 
Some general ergodicity results furthermore exist for 
them~\citep{bierkens2029ergodicity,deligiannidis2019exponential,durmus2020geometric} 
where MCMC methods convergence typically are only understood for the simplest kernels and 
targets~\citep[see, however,][for a promising avenue of research]{andrieu2022poincar}.

Of course, these properties do not come at zero cost and PDMP samplers are
often harder to implement than their MCMC counterparts. The main reason for this
is the need to sample from a Poisson process with a position-dependent rate,
which can typically only be done via Poisson thinning~\citep[see, e.g.,][]{Sutton2023ConvexConcave}, akin to rejection sampling, limiting
their application to the class of targets for which a tractable upper bound to
the rate is known or can be computed. Some work to mitigate this issue has recently been
undertaken~\citep[see, e.g.][]{corbella2022automatic,pagani2022nuzz} but typically incurs an unknown
bias in the simulation procedure. As a consequence, we limit ourselves to the
classical thinning procedure in this article and do not consider these biased solutions further.

\subsection{Organisation and contributions}\label{subsec:contribs}

The main goal of this article is to derive coupled estimators for PDMPs
samplers following the method of~\citet{Jacob2020unbiasedMCMC}. In view of
this, the article is organised in the following way. 
\begin{enumerate}
    \item In Section~\ref{sec:unbiased-estimators-for-continuous-time-markov-processes}, we generalise the coupled estimators method to continuous-time processes and provide sufficient conditions for the resulting estimators to verify the hypotheses of~\citet{Jacob2020unbiasedMCMC}.
    \item In Section~\ref{sec:coupling-the-bouncy-particle-sampler}, we describe a coupling for BPS and BS by combining elements from \citet{durmus2020geometric} with time-synchronisation procedures. We further discuss some conditions under which the BPS coupling verifies the consistency hypotheses of Section~\ref{sec:unbiased-estimators-for-continuous-time-markov-processes}.
    \item In Section~\ref{sec:coupling-the-coordinate-sampler}, we describe a coupling for a modified version of the coordinate sampler in terms of successive time-synchronisation of the two coupled processes. 
    \item Empirical validation of the methods introduced in this article is carried out in Section~\ref{sec:experiments}, where we provide preliminary results on the appeal of the method.%
\end{enumerate}  
        
Throughout the article, we will make use of the following notations: 
$\PPP^{\lambda}_u$ is the distribution of duration until the first event occurring after time $u$ of a (possibly
non-homogeneous) Poisson process with intensity function $\lambda\colon t \mapsto \lambda_t$. We write
$\ppp^{\lambda}_u$ for its density. By a slight abuse of notation, we will keep the same notation even when $\lambda$ is a constant function, and equate $\lambda$ with its constant value.

\section{Unbiased estimators for continuous-time Markov processes}\label{sec:unbiased-estimators-for-continuous-time-markov-processes}
In this section, we introduce a new class of unbiased estimators for continuous-time Monte Carlo. This class of estimators is motivated by delayed couplings between PDMPs we introduce in the subsequent sections, but is valid for a more general class of processes. To begin we define the notion of an asynchronous coupling of continuous-time processes.
    \begin{definition}[$\Delta$-coupling]
        \label{def:async-coupling}
        For a given $\Delta>0$, 
        we say that two $\nu$-ergodic continuous-time stochastic processes $(Z^1_t)_{t\geq 0}$ and $(Z^2_t)_{t\geq 0}$ are $\Delta$-coupled when there exist a random time $0 < \kappa < \infty$, such that for all $t \geq 0$,
        \begin{equation}
            \label{eq:async-coupling}
            Z^1_{\kappa + \Delta + t} = Z^2_{\kappa + t}.
        \end{equation}
        We will call $\Delta$ the time delay for the coupling and $\kappa$ its coupling time.
    \end{definition}
    This will allow us to construct unbiased estimators resembling those appearing in~\citet{Jacob2020unbiasedMCMC}. The first class of estimators uses a discretised version of the path and can be computed no matter what the test function is. The second class makes use of the full paths of the process and typically has a lower variance,  but is restricted to when time integrals along a path can be computed (for instance polynomial test functions relating to the moments of the stationary distribution). 
    
    The estimators presented in this section will rely on the following assumptions.
    \begin{assumption}[Marginal equality]
        \label{ass:marginal-equality}
        For all $t \geq 0$, $\mathcal{L}(Z^1_t) = \mathcal{L}(Z^2_t)$.
    \end{assumption}
    \begin{assumption}[$\Delta$-coupling]
        \label{ass:delta-coupling}
        The two processes $Z^1$ and $Z^2$ are $\Delta$-coupled as per Definition~\ref{def:async-coupling}.
    \end{assumption}
    \begin{assumption}
        [Tails of $\kappa$]
        \label{ass:coupling-tails}
        The $\Delta$-coupling time $\kappa$ is such that $\mathbb{P}(\kappa > t) < a \exp(-bt)$ for $a, b > 0$.
    \end{assumption}

    \begin{remark}[Suboptimality of Assumption~\ref{ass:coupling-tails}]
        We note that the requirement for geometric tails for the coupling time was relaxed in \citet{middleton2020unbiased}. However, because the geometric assumption is verified in practice for the bouncy particle sampler, we keep this stronger assumption for clarity of exposition.
    \end{remark}
    
    The following two hypotheses are needed to ensure the well-posedness of our estimators. The first one, more restrictive, will be useful in the context of discretised PDMPs, while the second one is less so, and will be useful for estimators with continuous PDMPs.
    In both cases, $h:\mathcal{X} \rightarrow \mathbb{R}$ is a fixed measurable test function.

    \begin{assumption}
        [Integrability of $h$, discrete version]
        \label{ass:integrability-v2}
        We have
        \begin{equation}
            \label{eq:idiscrete_ntegrated_ergodicity}
            \pi(h) = \lim_{t \to \infty}\mathbb{E} \left[h(Z_t) \right].
        \end{equation}
        Furthermore, there exists $\eta > 0$ and $D < \infty$ such that, for all $t> 0$,
        \begin{equation}
            \mathbb{E} \left[\abs{h(Z_t)}^{2 + \eta}\right] < D.
        \end{equation}
    \end{assumption}
    \begin{assumption}
        [Integrability of $h$, continuous version]
        \label{ass:integrability-v1}
        Let $\Delta > 0$ be fixed, the following holds:
        \begin{equation}
            \label{eq:integrated_ergodicity}
            \pi(h) = \lim_{t \to \infty}\mathbb{E} \left[ \frac{1}{\Delta}
                \int_{t}^{t + \Delta} h(Z_s) \dd{s} \right].
        \end{equation}
        Furthermore, there exists $\eta > 0$ and $D < \infty$ such that, for all $t> 0$,
        \begin{equation}
            \mathbb{E} \left[\abs{\frac{1}{\Delta}
            \int_{t}^{t + \Delta} h(Z_s) \dd{s}}^{2 + \eta}\right] < D.
        \end{equation}
    \end{assumption}

    \begin{proposition}
        Assumption~\ref{ass:integrability-v2} implies Assumption~\ref{ass:integrability-v1}.
    \end{proposition}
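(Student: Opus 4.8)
The plan is to transfer the two clauses of Assumption~\ref{ass:integrability-v2} to Assumption~\ref{ass:integrability-v1} separately, treating the uniform moment bound first since it also supplies the integrability needed to justify the interchanges of expectation and integration in the limit clause. Throughout, I would write $g(s) \coloneqq \mathbb{E}[h(Z_s)]$.

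For the moment bound, I would apply Jensen's inequality to the convex map $x \mapsto \abs{x}^{2+\eta}$ against the normalised Lebesgue measure $\Delta^{-1}\dd{s}$ on $[t, t+\Delta]$, which is a probability measure. This yields the pathwise bound
\begin{equation}
  \abs{\frac{1}{\Delta}\int_t^{t+\Delta} h(Z_s)\dd{s}}^{2+\eta}
    \leq \frac{1}{\Delta}\int_t^{t+\Delta} \abs{h(Z_s)}^{2+\eta}\dd{s}.
\end{equation}
Taking expectations and using Tonelli's theorem (the integrand is non-negative) to swap $\mathbb{E}$ and the integral, the right-hand side is at most $\Delta^{-1}\int_t^{t+\Delta} \mathbb{E}[\abs{h(Z_s)}^{2+\eta}]\dd{s} \leq D$, uniformly in $t$. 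Hence the second clause of the continuous version holds with exactly the same constants $\eta$ and $D$.

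For the limit clause, I would first note that the uniform bound $\mathbb{E}[\abs{h(Z_s)}^{2+\eta}] < D$ gives, by Lyapunov's (or Jensen's) inequality, $\mathbb{E}[\abs{h(Z_s)}] \leq D^{1/(2+\eta)}$ for all $s$, so that $\int_t^{t+\Delta}\mathbb{E}[\abs{h(Z_s)}]\dd{s} < \infty$ and Fubini's theorem permits the identity $\mathbb{E}[\Delta^{-1}\int_t^{t+\Delta} h(Z_s)\dd{s}] = \Delta^{-1}\int_t^{t+\Delta} g(s)\dd{s}$. By the first clause of Assumption~\ref{ass:integrability-v2}, $g(s) \to \pi(h)$ as $s \to \infty$, and it then remains to invoke the elementary fact that a sliding average of a convergent function converges to the same limit: given $\varepsilon > 0$, one chooses $T$ with $\abs{g(s) - \pi(h)} < \varepsilon$ for $s > T$, whence for every $t > T$,
\begin{equation}
  \abs{\frac{1}{\Delta}\int_t^{t+\Delta} g(s)\dd{s} - \pi(h)}
    \leq \frac{1}{\Delta}\int_t^{t+\Delta} \abs{g(s) - \pi(h)}\dd{s} < \varepsilon,
\end{equation}
which establishes~\eqref{eq:integrated_ergodicity}.

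No step is genuinely delicate here; the only points demanding care are the measurability and integrability prerequisites for the Fubini--Tonelli interchanges. Joint measurability of $(\omega, s) \mapsto h(Z_s(\omega))$ follows from the c\`adl\`ag regularity of the PDMP paths together with the measurability of $h$, and the uniform $(2+\eta)$-moment bound furnishes the dominating integrability. These are therefore the only hypotheses I would spell out explicitly, the remainder of the argument being routine.
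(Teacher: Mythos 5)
Your proof is correct and follows the same overall decomposition as the paper: Jensen's inequality against the normalised Lebesgue measure for the uniform $(2+\eta)$-moment bound, then the Lyapunov bound $\mathbb{E}[\abs{h(Z_s)}] \leq D^{1/(2+\eta)}$ to justify the Fubini interchange $\mathbb{E}\bigl[\Delta^{-1}\int_t^{t+\Delta} h(Z_s)\dd{s}\bigr] = \Delta^{-1}\int_t^{t+\Delta}\mathbb{E}[h(Z_s)]\dd{s}$. The only place you diverge is the final limit: the paper changes variables to write the average as $\Delta^{-1}\int_0^{\Delta}\mathbb{E}[h(Z_{t+s})]\dd{s}$ and invokes dominated convergence over the fixed domain $[0,\Delta]$, using the uniform bound $D^{1/(2+\eta)}$ as the dominating constant, whereas you argue directly that a sliding average of a convergent function $g(s) = \mathbb{E}[h(Z_s)]$ converges to the same limit via an $\varepsilon$-argument. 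Your route is marginally more elementary: it makes clear that the uniform moment bound is needed only for the Fubini step, while the passage to the limit is a purely deterministic real-analysis fact requiring no domination at all. Both arguments are equally valid; the paper's has the minor advantage of reusing machinery (DCT) already set up by the bound it just derived.
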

    \begin{proof}
        Using Jensen's inequality, if the second part of Assumption~\ref{ass:integrability-v2} is verified for a given pair $\eta, D > 0$,
        \begin{align*}
            \abs{\frac{1}{\Delta}\int_{t}^{t + \Delta} h(Z_s) \dd{s}}^{2 + \eta}
            &\leq \frac{1}{\Delta}\int_{t}^{t + \Delta} \abs{h(Z_s)}^{2 + \eta}\dd{s}
        \end{align*}
        so that, taking the expectation left and right, the second part of Assumption~\ref{ass:integrability-v1} is verified too.
        On the other hand, this also implies that Fubini's theorem applies, so that, for any $t$, $\mathbb{E} \left[ \frac{1}{\Delta} \int_{t}^{t + \Delta} h(Z_s) \dd{s} \right]$ exists, and we have
        \begin{align*}
            \mathbb{E} \left[ \frac{1}{\Delta} \int_{t}^{t + \Delta} h(Z_s) \dd{s} \right]
            &= \frac{1}{\Delta} \int_{t}^{t + \Delta} \mathbb{E} \left[ h(Z_s) \right]\dd{s} =  \frac{1}{\Delta} \int_{0}^{\Delta} \mathbb{E} \left[ h(Z_{t + s}) \right]\dd{s}.
        \end{align*}
        As per Assumption~\ref{ass:integrability-v2}, for all $s \geq 0$, $\lim_{t\to\infty}\mathbb{E} \left[ h(Z_{t + s})\right]$ exists and is equal to $\pi(h)$.
        We can now apply the triangle and Jensen's inequality to establish
        \begin{align*}
            \mathbb{E} \left[ h(Z_{t + s})\right]
            &\leq \mathbb{E} \left[\abs{h(Z_{t + s})}\right] \\
            &\leq \mathbb{E} \left[\abs{h(Z_{t + s})}^{2 + \eta}\right]^{1 / (2 + \eta)} \leq D^{1 / (2 + \eta)}
        \end{align*}
        uniformly in $t$ and $s$. This ensures that we can apply Lebesgue's dominated convergence to conclude that
        \begin{equation*}
            \pi(h) = \lim_{t \to \infty}\mathbb{E} \left[ \frac{1}{\Delta}
                \int_{t}^{t + \Delta} h(Z_s) \dd{s} \right].
        \end{equation*}
    \end{proof}

    All these assumptions directly mirror that of~\citet{Jacob2020unbiasedMCMC}. In Section~\ref{subsec:assump-PBS}, we give sufficient conditions for them to be verified in the case of the bouncy particle sampler. Intuitively, the case of the boomerang sampler should follow from the BPS, but deriving the same conditions for it is out of the scope of this paper.

    \subsection{Time-discretised estimator}\label{subsec:time-discretised}

A first estimator, which resembles closely the one of
\citet{Jacob2020unbiasedMCMC}, is given by the time-discretised version of the
stochastic process and is derived from the remark that, for any given $\Delta$,
$(Z_{k\Delta})_{k \geq 0}$ forms a standard (discrete time) Markov chain. 
Hence, supposing that $Z^1_t$ and $Z^2_t$ are $\Delta$-coupled, 
$(Z^1_{k \Delta})_{k \geq 0}$ and $(Z^2_{k\Delta})_{k \geq 0}$ 
are then coupled in the sense of~\citet[Section 2.1]{Jacob2020unbiasedMCMC}.
We can consequently define the following estimators.
    \begin{definition}
        [Discretised Rhee \& Glynn, DRG]
        \label{def:drg}
        For two $\Delta$-coupled stochastic processes $Z^1_t$, and $Z^2_t$, the discrete Rhee \& Glynn estimator is defined to be
        \begin{equation}
            \label{eq:drg}
            \begin{split}
                \mathrm{DRG}(k, \Delta)
                &\coloneqq  h(Z^1_{k \Delta}) 
                + \sum_{n=k+1}^{\lfloor\frac{\kappa + \Delta}{\Delta}\rfloor}
                \left\{h(Z^1_{n \Delta})- h(Z^2_{(n - 1)\Delta}) \right\}
            \end{split}
        \end{equation}
        for any $k\geq 1$, $\Delta > 0$.
    \end{definition}
    \begin{proposition}
        \label{prop:consistency-drg}
        Under Assumptions~\ref{ass:marginal-equality},~\ref{ass:delta-coupling},~\ref{ass:coupling-tails}, and~\ref{ass:integrability-v2}, estimator \eqref{eq:drg} has mean $\pi(h)$, finite variance and finite expected coupling time $\tau$.
    \end{proposition}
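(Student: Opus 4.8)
The plan is to reduce the statement to the discrete-time unbiasedness theorem of \citet{Jacob2020unbiasedMCMC} by exhibiting that the sampled chains $(Z^1_{k\Delta})_{k\geq 0}$ and $(Z^2_{k\Delta})_{k\geq 0}$ satisfy the three requirements of that framework: marginal equality of the two chains at every index, coupling (faithful meeting) in finite time with geometric tails, and the integrability/polynomial-moment condition on the test function. Once these are verified, the estimator \eqref{eq:drg} is exactly the Rhee--Glynn estimator of Equation~\eqref{eq:unbiased-coupling} with $B = k$ and $N \to \infty$ (or equivalently $B = N = k$, the single-term ``time-averaged'' version reduced to one summand), so the conclusion---mean $\pi(h)$, finite variance, finite expected coupling time---transfers directly.

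\textbf{Step 1 (marginal equality of the discretised chains).} First I would observe that Assumption~\ref{ass:marginal-equality} gives $\mathcal{L}(Z^1_t) = \mathcal{L}(Z^2_t)$ for \emph{all} $t \geq 0$, so in particular $\mathcal{L}(Z^1_{k\Delta}) = \mathcal{L}(Z^2_{k\Delta})$ for every $k$; this is condition (a) of \citet{Jacob2020unbiasedMCMC}.

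\textbf{Step 2 (finite coupling time and its tails).} Next I would translate the $\Delta$-coupling of Definition~\ref{def:async-coupling} into a discrete meeting time. By Assumption~\ref{ass:delta-coupling} there is a finite random time $\kappa$ with $Z^1_{\kappa + \Delta + t} = Z^2_{\kappa + t}$ for all $t \geq 0$. Setting $\tau \coloneqq \lfloor (\kappa + \Delta)/\Delta \rfloor + 1$ (so that $n\Delta \geq \kappa + \Delta$ for all $n \geq \tau$), this yields $Z^1_{n\Delta} = Z^2_{(n-1)\Delta}$ for all $n \geq \tau$, which is the faithful meeting condition (b). Moreover $\tau \leq \kappa/\Delta + 2$, so $\tau$ inherits geometric tails from Assumption~\ref{ass:coupling-tails}: $\mathbb{P}(\tau > t) \leq \mathbb{P}(\kappa > (t-2)\Delta) \leq a' \exp(-b' t)$ for suitable $a', b' > 0$, and in particular $\mathbb{E}[\tau] < \infty$.

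\textbf{Step 3 (integrability of the test function and conclusion).} Finally I would check the moment conditions of \citet{Jacob2020unbiasedMCMC} for the summands $h(Z^1_{k\Delta})$. Assumption~\ref{ass:integrability-v2} supplies both the convergence $\pi(h) = \lim_{t\to\infty}\mathbb{E}[h(Z_t)]$ (ensuring the telescoping sum has the right mean) and the uniform bound $\mathbb{E}[|h(Z_t)|^{2+\eta}] < D$ (ensuring $2+\eta$ moments, which, combined with the geometric tails of $\tau$ from Step~2, gives finite variance via a Hölder/Minkowski argument on the bias-correction term). With (a), (b) and the moment bound in hand, the proof of \citet[Theorem~1 / Proposition~1]{Jacob2020unbiasedMCMC} applies verbatim to the chain sampled at spacing $\Delta$, delivering unbiasedness $\mathbb{E}[\mathrm{DRG}(k,\Delta)] = \pi(h)$, finite variance, and $\mathbb{E}[\tau] < \infty$.

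\textbf{Main obstacle.} The routine work---marginal equality and rewriting $\kappa$ as a discrete meeting index---is immediate. The delicate point is Step~3: I must confirm that the single-index version in Definition~\ref{def:drg} (a telescoping sum starting at a fixed $k$ rather than a Cesàro average over a window $[B,N)$) still fits the hypotheses of \citet{Jacob2020unbiasedMCMC}, and that the uniform moment bound together with the geometric tail of $\tau$ genuinely controls the variance of the correction series $\sum_{n=k+1}^{\tau-1}\{h(Z^1_{n\Delta}) - h(Z^2_{(n-1)\Delta})\}$. The standard argument bounds each increment in $L^{2+\eta}$ uniformly by $2D^{1/(2+\eta)}$ and then uses the exponential decay of $\mathbb{P}(\tau > n)$ to sum the series in $L^2$; verifying that this decay rate beats the polynomial growth of the number of terms is where the $2+\eta$ (rather than merely second) moment is essential.
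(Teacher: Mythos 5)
Your proposal is correct and follows essentially the same route as the paper's proof: both reduce the statement to the discrete-time result of \citet{Jacob2020unbiasedMCMC} by viewing $(Z^1_{k\Delta}, Z^2_{k\Delta})_{k\geq 0}$ as a coupled Markov chain, with the only non-trivial check being that the discrete meeting time (the paper writes it as $\lceil \kappa/\Delta\rceil$, you bound it by $\lfloor \kappa/\Delta\rfloor + 2$; the discrepancy is immaterial) inherits geometric tails from Assumption~\ref{ass:coupling-tails}. Your Steps 1 and 3 simply make explicit the parts the paper dismisses as ``directly mirroring'' the assumptions of \citet{Jacob2020unbiasedMCMC}, so there is no substantive difference in approach.
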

    \begin{proof}
        The first part of this statement directly follows from the fact that $(Z^1_{k\Delta}, Z^2_{k\Delta})$ is a coupled Markov chain which follows the assumptions of~\citet{Jacob2020unbiasedMCMC}. To ensure this, the only assumption which does not immediately mirror~\citet{Jacob2020unbiasedMCMC} is Assumption~\ref{ass:coupling-tails}.
        As a matter of fact, the coupling time of $(Z^1_{k\Delta}, Z^2_{k\Delta})$ is exactly $\lceil \kappa / \Delta \rceil$, so that sub-exponential tails of $\kappa$ are inherited by the coupling of the discretised PDMPs.
    \end{proof}
    We follow~\citet{Jacob2020unbiasedMCMC}, and assuming a $\Delta$-coupling between $Z^1$ and $Z^2$, with coupling time $\kappa$, we can average the estimators $\mathrm{DRG}(l, \Delta)$ over the values $l=k, k+1, \ldots, m$ to obtain another estimator.
    \begin{definition}
        [Averaged Discretised Rhee \& Glynn, ADRG]
        \label{def:adrg}
        For two $\Delta$-coupled stochastic processes $Z^1_t$, and $Z^2_t$, the averaged discrete Rhee \& Glynn estimator is defined to be
        \begin{equation}
            \label{eq:adrg}
            \begin{split}
                \mathrm{ADRG}(k, m, \Delta)
                &\coloneqq \frac{1}{m - k + 1} \sum_{l=k}^{m} \mathrm{DRG}(l, \Delta) \\
                &= \frac{1}{m - k + 1} \sum_{l=k}^m h(Z^1_{l \Delta}) + \sum_{l=k+1}^{\lfloor\frac{\kappa + \Delta}{\Delta}\rfloor}\min\left(1, \frac{l - k}{m - k + 1}\right) \left\{h(Z^1_{l \Delta})- h(Z^2_{(l - 1) \Delta})\right\}
            \end{split}
        \end{equation}
        for any $\Delta > 0$, $m > k \geq 0$.
    \end{definition}
    In turn, under the same hypotheses as Proposition~\ref{prop:consistency-drg}, estimator  \eqref{eq:adrg} verifies the following corollary.
    \begin{corollary}
        \label{cor:consistency-adrg}
        Under Assumptions~\ref{ass:marginal-equality},~\ref{ass:delta-coupling},~\ref{ass:coupling-tails}, and~\ref{ass:integrability-v2}, estimator~\eqref{eq:adrg} has mean $\pi(h)$, finite variance and finite expected coupling time $\tau$.
    \end{corollary}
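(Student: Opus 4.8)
The plan is to exploit the fact that, by its very definition in~\eqref{eq:adrg}, $\mathrm{ADRG}(k,m,\Delta)$ is the arithmetic mean of the finitely many estimators $\mathrm{DRG}(l,\Delta)$, $l=k,\ldots,m$, each of which has already been shown in Proposition~\ref{prop:consistency-drg}, under exactly the present hypotheses, to have mean $\pi(h)$, finite variance, and finite expected coupling time. The proof therefore reduces to checking that these three properties are stable under taking a deterministic, finite average. For unbiasedness I would simply invoke linearity of expectation: since $\mathbb{E}[\mathrm{DRG}(l,\Delta)]=\pi(h)$ for every $l$, averaging $m-k+1$ such terms again yields expectation $\pi(h)$, with no integrability needed beyond what Proposition~\ref{prop:consistency-drg} already supplies.

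For the finite variance, the one genuine subtlety is that the summands $\mathrm{DRG}(l,\Delta)$ are highly correlated --- they are built from the same coupled trajectory and share the coupling time $\kappa$ --- so the variance of the average is \emph{not} the average of the variances, and computing the covariances directly would be cumbersome. I would sidestep this entirely by working in $L^2$ and applying Minkowski's inequality,
\[
\left\lVert \mathrm{ADRG}(k,m,\Delta) \right\rVert_{2} \leq \frac{1}{m-k+1}\sum_{l=k}^{m} \left\lVert \mathrm{DRG}(l,\Delta) \right\rVert_{2}.
\]
Each term on the right-hand side is finite because every $\mathrm{DRG}(l,\Delta)$ has finite mean and finite variance, hence finite second moment; as the range of summation is fixed and finite, the $L^2$ norm of $\mathrm{ADRG}(k,m,\Delta)$ is finite, which is exactly finiteness of its variance.

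Finally, the expected coupling time is unchanged: all of the $\mathrm{DRG}(l,\Delta)$ are computed from the same pair of underlying $\Delta$-coupled processes, which coalesce at the discrete index $\lceil \kappa/\Delta\rceil$, and Assumption~\ref{ass:coupling-tails} gives $\mathbb{E}[\kappa]<\infty$ and hence $\mathbb{E}[\lceil \kappa/\Delta\rceil]<\infty$, precisely as in the proof of Proposition~\ref{prop:consistency-drg}. The only step requiring any care is the variance bound, where the right move is to resist expanding the covariances and instead use the triangle inequality in $L^2$; everything else is bookkeeping that transfers the conclusions of Proposition~\ref{prop:consistency-drg} through the linear averaging operation.
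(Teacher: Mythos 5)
Your proposal is correct and follows essentially the same route as the paper, which states the corollary without proof precisely because $\mathrm{ADRG}(k,m,\Delta)$ is by definition a finite average of the $\mathrm{DRG}(l,\Delta)$ estimators covered by Proposition~\ref{prop:consistency-drg}, so the three properties transfer through the averaging exactly as you argue (linearity for the mean, the $L^2$ triangle inequality for the variance, and the shared coupling time $\lceil\kappa/\Delta\rceil$ with geometric tails from Assumption~\ref{ass:coupling-tails}). Your use of Minkowski's inequality to avoid computing covariances between the correlated $\mathrm{DRG}(l,\Delta)$ terms is the right way to make the variance step rigorous.
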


    While estimators~\eqref{eq:drg} and~\eqref{eq:adrg} directly mirror~\citet{Jacob2020unbiasedMCMC}, we will see later on that we may sometimes only be able to construct successful couplings for \emph{large enough} values of $\Delta$. As a consequence, using these directly would result in an unreasonable waste of intermediate information, and would be unlikely to outperform their reversible, discrete, Markov chain Monte Carlo counterparts.

    \subsection{Doubly discretised estimator}\label{subsec:discrete-bis}

    In view of the potential inefficiency of the estimators of Definitions~\ref{def:drg} and~\ref{def:adrg}, in this section, we introduce a variation on it, which allows for sub-sampling of the trajectories. Let $\delta > 0$ and $M \geq 1$ be an arbitrary float and integer, respectively. We now suppose that the two processes $Z^1$ and $Z^2$ are $M \delta = \Delta$-coupled. Our second family of estimators now relies on the following observation.

    \begin{remark}\label{rem:doubly-discrete-chain}
        $k \mapsto \mathcal{Z}_{k} \coloneqq \left[Z_{(k + 1) \Delta}, Z_{(k + 1) \Delta - \delta}, \ldots, Z_{(k + 1) \Delta - (M-1) \delta}\right]$ is a Markov chain. Furthermore, if the coupling time of $Z^1, Z^2$ is $K$, then the coupling time of $\mathcal{Z}^1, \mathcal{Z}^2$ is at most $K$, and Assumptions~\ref{ass:marginal-equality},~\ref{ass:delta-coupling},~\ref{ass:coupling-tails},~\ref{ass:integrability-v1} hold for $\mathcal{Z}$ as soon as they hold for $Z$.
    \end{remark}

    We can now define our first doubly-discretised estimator.
        \begin{definition}
        [Doubly Discretised Rhee \& Glynn, DDRG]
        \label{def:ddrg}
        For two $\Delta$-coupled stochastic processes $Z^1_t$, and $Z^2_t$, the doubly discrete Rhee \& Glynn estimator is defined to be
        \begin{equation}
            \label{eq:ddrg}
            \begin{split}
                \mathrm{DDRG}(k, \Delta, M)
                &\coloneqq  \frac{1}{M}\sum_{m=0}^{M-1} h(Z^1_{k \Delta - m\delta})  
                + \sum_{n=k+1}^{\lfloor\frac{\kappa + \Delta}{\Delta}\rfloor}
                \left\{\frac{1}{M}\sum_{m=0}^{M-1} h(Z^1_{n \Delta - m\delta})  - \frac{1}{M}\sum_{m=0}^{M-1} h(Z^2_{(n-1) \Delta - m\delta})  \right\}\\
                &=  \frac{1}{M}\sum_{m=0}^{M-1} h(Z^1_{k \Delta - m\delta})  
                + \frac{1}{M}\sum_{n=k+1}^{\lfloor\frac{\kappa + \Delta}{\Delta}\rfloor}\sum_{m=0}^{M-1}
                \left\{h(Z^1_{n \Delta - m\delta}) - h(Z^2_{(n-1) \Delta - m\delta})  \right\}
            \end{split}
        \end{equation}
        for any $k, M\geq 1$, $\delta > 0$, and $\Delta = M \delta$.
    \end{definition}

    Otherwise said, the estimator~\eqref{eq:ddrg} for $Z$ is the same as the estimator~\eqref{eq:drg} for $\mathcal{Z}$ and the test function $\mathcal{h}(z_1, \ldots, z_M) = \frac{1}{M} \sum_{m=1}^M h(z_m)$. As a consequence, Proposition~\ref{prop:consistency-drg} applies to this version too.
    
    Similarly, we may define an averaged version of the estimator~\eqref{eq:ddrg}.

    \begin{definition}
        [Averaged Doubly Discretised Rhee \& Glynn, ADDRG]
        \label{def:addrg}
        For two $\Delta$-coupled stochastic processes $Z^1_t$, and $Z^2_t$, the averaged doubly discrete Rhee \& Glynn estimator is defined to be
        \begin{align}
            \label{eq:addrg}
                \mathrm{ADDRG}(k, m, \Delta, M) \coloneqq
                & \frac{1}{m - k + 1} \sum_{l=k}^{m} \mathrm{DDRG}(l, \delta, M) \\
                = & \frac{1}{M(m - k + 1)} \sum_{l=k}^m \sum_{j=0}^{M-1}h(Z^1_{l \Delta - j \delta}) \\
                &+ \sum_{l=k+1}^{\lfloor\frac{\kappa + \Delta}{\Delta}\rfloor}\sum_{j=0}^{M-1}\min\left(\frac{1}{M}, \frac{l - k}{M(m - k + 1)}\right) \left\{h(Z^1_{l \Delta - j \delta}) - h(Z^2_{(l - 1) \Delta - j \delta})\right\}
        \end{align}
        for any $\delta > 0$, $m > k \geq 1$, $M \geq 1$, and for $\Delta = M \delta$.
    \end{definition}
    Again, Corollary~\ref{cor:consistency-adrg} applies for the estimator as soon as it applies to Definition~\ref{def:addrg}.
    In the next section, we see how this double discretisation can be taken one step further.
    
    \subsection{Time integrated estimator}\label{subsec:time-integrated}
    Our last family of estimators is amenable to test functions that can be integrated along a trajectory and rely on telescopic sums over integrals. It can be seen as a continuous limit of the estimators of Section~\ref{subsec:discrete-bis}.
    Assuming a $\Delta$-coupling between $Z^1$ and $Z^2$, with coupling time $\kappa$, we have, for $l\geq 1$, 
    \begin{align}
        &\mathbb{E}\left[\frac{1}{\Delta}\int_{l \Delta}^{(l+1)\Delta} h(Z^1_s) \dd{s}\right] \\
        &= \mathbb{E}\left[\frac{1}{\Delta}\int_{k \Delta}^{(k+1)\Delta} h(Z^1_s) \dd{s} \right] + \sum_{n=k+1}^{l}\left\{\mathbb{E}\left[\frac{1}{\Delta}\int_{n \Delta}^{(n+1)\Delta} h(Z^1_s) \dd{s} \right] - \mathbb{E}\left[\frac{1}{\Delta}\int_{(n-1) \Delta}^{n\Delta} h(Z^1_s) \dd{s} \right]\right\} \nonumber \\
        &= \mathbb{E}\left[\frac{1}{\Delta}\int_{k \Delta}^{(k+1)\Delta} h(Z^1_s) \dd{s} \right] + \sum_{n=k+1}^{l}\left\{\mathbb{E}\left[\frac{1}{\Delta}\int_{n \Delta}^{(n+1)\Delta} h(Z^1_s) \dd{s} \right] - \mathbb{E}\left[\frac{1}{\Delta}\int_{(n-1) \Delta}^{n\Delta} h(Z^2_s) \dd{s} \right]\right\} \label{eq:invariance} \\
        &= \mathbb{E}\Bigg[\frac{1}{\Delta}\int_{k \Delta}^{(k+1)\Delta} h(Z^1_s) \dd{s} + \sum_{n=k+1}^{l}\frac{1}{\Delta}\left\{\int_{n \Delta}^{(n+1)\Delta} h(Z^1_s) \dd{s} - \int_{(n - 1) \Delta}^{n\Delta} h(Z^2_s) \dd{s}\right\} \Bigg] \nonumber\\
        &= \mathbb{E}\Bigg[\frac{1}{\Delta}\int_{k \Delta}^{(k+1)\Delta} h(Z^1_s) \dd{s} + \sum_{n=k+1}^{l}\frac{1}{\Delta}\int_{n \Delta}^{(n+1)\Delta}\left\{ h(Z^1_s)- h(Z^2_{s - \Delta}) \right\}\dd{s} \Bigg] \nonumber\\
        &= \mathbb{E}\Bigg[\frac{1}{\Delta}\int_{k \Delta}^{(k+1)\Delta} h(Z^1_s) \dd{s} + \sum_{n=k+1}^{l \wedge \lfloor\frac{\kappa + \Delta}{\Delta}\rfloor}\frac{1}{\Delta}\int_{n \Delta}^{(n+1)\Delta} \left\{h(Z^1_s)- h(Z^2_{s - \Delta}) \right\}\dd{s}\Bigg] \label{eq:equality}
    \end{align}
    where step~\eqref{eq:invariance} comes from Assumption~\ref{ass:marginal-equality}, and step \eqref{eq:equality} from Assumption~\ref{ass:delta-coupling}.
    Taking the limit for $l \to \infty$, we obtain our first continuous-time estimator.
    \begin{definition}
        [Continuous Rhee \& Glynn, CRG]
        \label{def:crg}
        For two $\Delta$-coupled stochastic processes $Z^1_t$, and $Z^2_t$, the continuous Rhee \& Glynn estimator is defined to be
        \begin{equation}
            \label{eq:crg}
            \begin{split}
                \mathrm{CRG}(k, \Delta)
                &\coloneqq \frac{1}{\Delta}\int_{k \Delta}^{(k+1)\Delta} h(Z^1_s) \dd{s} + \sum_{n=k+1}^{\lfloor\frac{\kappa + \Delta}{\Delta}\rfloor}\frac{1}{\Delta}\int_{n \Delta}^{(n+1)\Delta} \left\{h(Z^1_s)- h(Z^2_{s - \Delta}) \right\}\dd{s}
            \end{split}
        \end{equation}
        for any $k\geq 1$, $\Delta > 0$.
    \end{definition}
    An important remark, which underlies the proof of Proposition~\ref{prop:consistency-crg} below is the following one.
    \begin{remark}\label{rem:continuous-is-discrete}
        The estimator in Definition~\ref{def:crg} can be seen as the (discrete-time) estimator of~\citet{Jacob2020unbiasedMCMC} for the Markov chain 
        $\left[Z_{(k + 1) \Delta}, \frac{1}{\Delta} \int_{k \Delta}^{(k+1)\Delta} h(Z_s) \dd{s}\right]$ and a choice of test function $g(x, y) = y$.
    \end{remark}
    This remark is the basis for the following proposition.
    \begin{proposition}
        \label{prop:consistency-crg}
        Under Assumptions~\ref{ass:marginal-equality},~\ref{ass:delta-coupling},~\ref{ass:coupling-tails}, and~\ref{ass:integrability-v1}, the estimator \eqref{eq:crg} has mean $\pi(h)$, finite variance and finite expected coupling time $\tau$.
    \end{proposition}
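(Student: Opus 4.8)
The plan is to exploit Remark~\ref{rem:continuous-is-discrete} and reduce the statement to the discrete-time result of~\citet{Jacob2020unbiasedMCMC}, in the same spirit as the proof of Proposition~\ref{prop:consistency-drg}. I would introduce the augmented process
\[
    W_k \coloneqq \left[Z_{(k+1)\Delta},\ \frac{1}{\Delta}\int_{k\Delta}^{(k+1)\Delta} h(Z_s)\dd{s}\right],
\]
and first check that $(W_k)_{k\geq 0}$ is a time-homogeneous Markov chain: its second, backward-looking component does not influence the dynamics, since propagating $W_k$ to $W_{k+1}$ only requires running $Z$ forward on $[(k+1)\Delta,(k+2)\Delta]$ starting from the endpoint $Z_{(k+1)\Delta}$, which is the first component of $W_k$ and a genuine Markov state of the PDMP. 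Taking $g(x,y)=y$ gives $g(W_k)=\frac{1}{\Delta}\int_{k\Delta}^{(k+1)\Delta} h(Z_s)\dd{s}$, and inserting $Z^1,Z^2$ into the discrete-time Rhee \& Glynn estimator for $(W^1,W^2)$ reproduces~\eqref{eq:crg} exactly; it therefore suffices to verify that the coupled pair $(W^1_k,W^2_k)$ together with $g$ satisfies the hypotheses of~\citet{Jacob2020unbiasedMCMC}.

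I would then check the three ingredients in turn. For the \emph{marginal equality} of the augmented chain, $\mathcal{L}(W^1_k)=\mathcal{L}(W^2_k)$, I would use that $Z^1$ and $Z^2$ are marginally the same $\nu$-ergodic PDMP and hence share a transition semigroup; combined with the equality of initial laws (Assumption~\ref{ass:marginal-equality} at $t=0$) this yields equality of all finite-dimensional distributions, hence equality in law of the path blocks $(Z^i_s)_{s\in[k\Delta,(k+1)\Delta]}$ and of their images under the functional defining $W_k$. For the \emph{meeting in finite time}, I would show $W^1_n=W^2_{n-1}$ as soon as $(n-1)\Delta\geq\kappa$: using $Z^1_s=Z^2_{s-\Delta}$ for $s\geq\kappa+\Delta$ (Assumption~\ref{ass:delta-coupling}) equates both components on the relevant blocks, so the (faithful) meeting time of the augmented chain is the deterministic increasing function $\tau'=\lfloor(\kappa+\Delta)/\Delta\rfloor+1$ of $\kappa$, which is precisely where the sum in~\eqref{eq:crg} truncates. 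Finally, for the \emph{tail and moment conditions}, Assumption~\ref{ass:coupling-tails} transfers geometric tails from $\kappa$ to $\tau'$ exactly as in Proposition~\ref{prop:consistency-drg}, and Assumption~\ref{ass:integrability-v1} is nothing but the statement that $g(W_k)$ is $\pi(h)$-ergodic, with $\lim_k \mathbb{E}[g(W_k)]=\pi(h)$, and uniformly bounded in $L^{2+\eta}$.

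With these verified, invoking the discrete-time theorem of~\citet{Jacob2020unbiasedMCMC} for the coupled chain $(W^1,W^2)$ and test function $g$ directly gives that $\mathrm{CRG}(k,\Delta)$ has mean $\pi(h)$, finite variance, and finite expected meeting time $\tau=\tau'$. I expect the main obstacle to be the marginal-equality step: Assumption~\ref{ass:marginal-equality} guarantees only equality of one-dimensional time marginals, whereas $g(W_k)$ is a functional of an entire $\Delta$-block of the trajectory, so $\mathcal{L}(W^1_k)=\mathcal{L}(W^2_k)$ must be extracted from the shared Markov semigroup rather than read off directly; for the unbiasedness claim alone this can be sidestepped by noting, via Fubini and Assumption~\ref{ass:marginal-equality} as in step~\eqref{eq:invariance}, that $\mathbb{E}[g(W^1_k)]=\mathbb{E}[g(W^2_k)]$. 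The remaining steps---the Markov property of $W$, the identification of $\tau'$, and the tail and moment bookkeeping---are routine and parallel the derivation preceding Definition~\ref{def:crg}.
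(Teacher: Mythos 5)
Your proposal is correct and follows essentially the same route as the paper: the paper's proof likewise rests on Remark~\ref{rem:continuous-is-discrete}, viewing \eqref{eq:crg} as the discrete-time coupled estimator for the augmented chain $\left[Z_{(k+1)\Delta},\ \frac{1}{\Delta}\int_{k\Delta}^{(k+1)\Delta} h(Z_s)\dd{s}\right]$ with test function $g(x,y)=y$, and then directly invokes the discrete-time theorem (Theorem 1 of \citet{middleton2020unbiased}, with \citet{Jacob2020unbiasedMCMC} cited as an equivalent alternative under the geometric-tail Assumption~\ref{ass:coupling-tails}). Your explicit verification of the Markov property of the augmented chain, the identification of its meeting time with the truncation index in \eqref{eq:crg}, and the observation that Assumption~\ref{ass:integrability-v1} is exactly the required moment/ergodicity condition on $g$ merely spell out what the paper delegates to the cited theorem, so there is no substantive difference in approach.
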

    \begin{proof}
        Remark~\ref{rem:continuous-is-discrete} makes it possible to apply, for example,
    Theorem 1 of~\citet{middleton2020unbiased} directly; see also~\citet{glynn_rhee_2014,rhee2015unbiased,Jacob2020unbiasedMCMC} for similar results.
    Note in particular that our Assumption~\ref{ass:integrability-v2} implies that their Assumption 1 holds for the test function $g(x, y)=y$ applied to the Markov chain 
    $\left[Z_{(k + 1) \Delta}, \frac{1}{\Delta} \int_{k \Delta}^{(k+1)\Delta} h(Z_s) \dd{s}\right]$. 
    \end{proof}

    As before, we can then define time-averaged versions of \eqref{eq:crg}.
    \begin{definition}
        [Averaged Continuous Rhee \& Glynn, ACRG]
        \label{def:arg}
        For two $\Delta$-coupled stochastic processes $Z^1_t$, and $Z^2_t$, the averaged continuous Rhee \& Glynn estimator is defined to be
        \begin{equation}
            \label{eq:acrg}
            \begin{split}
                \mathrm{ACRG}(k, m, \Delta)
                &\coloneqq \frac{1}{m - k + 1} \sum_{l=k}^{m} \mathrm{CRG}(l, \Delta) \\
                &= \frac{1}{m - k + 1} \sum_{l=k}^{m} \frac{1}{\Delta}\int_{l \Delta}^{(l+1)\Delta} h(Z^1_s) \dd{s} \\
                &\quad+ \sum_{l=k+1}^{\lfloor\frac{\kappa + \Delta}{\Delta}\rfloor}\min\left(1, \frac{l - k}{m - k + 1}\right)\frac{1}{\Delta}\int_{l \Delta}^{(l+1)\Delta} \left\{h(Z^1_s)- h(Z^2_{s - \Delta}) \right\}\dd{s}\\
                &= \frac{1}{(m - k + 1) \Delta} \int_{k \Delta}^{(m+1)\Delta} h(Z^1_s) \dd{s} \\
                &\quad+ \sum_{l=k+1}^{\lfloor\frac{\kappa + \Delta}{\Delta}\rfloor}\min\left(1, \frac{l - k}{m - k + 1}\right)\frac{1}{\Delta}\int_{l \Delta}^{(l+1)\Delta} \left\{h(Z^1_s)- h(Z^2_{s - \Delta}) \right\}\dd{s}
            \end{split}
        \end{equation}
        for any $\Delta > 0$, $m > k \geq 0$.
    \end{definition}
    As in~\citet{Jacob2020unbiasedMCMC}, we note that the two terms appearing in \eqref{eq:acrg} correspond to the classical time integrated estimator for ergodic stochastic processes and a bias correction term, respectively.
    Similarly, as for Proposition~\ref{prop:consistency-crg}, we have the following corollary.
    \begin{corollary}
        \label{prop:consistency-acrg}
        Under Assumptions~\ref{ass:marginal-equality},~\ref{ass:delta-coupling},~\ref{ass:coupling-tails}, and~\ref{ass:integrability-v1}, the estimator \eqref{eq:crg} has mean $\pi(h)$, finite variance and finite expected coupling time $\tau$.
    \end{corollary}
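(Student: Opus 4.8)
The plan is to reuse, verbatim, the mechanism that took us from the single estimator CRG to Proposition~\ref{prop:consistency-crg}, exactly as Corollary~\ref{cor:consistency-adrg} was obtained from Proposition~\ref{prop:consistency-drg}. The key observation is that $\mathrm{ACRG}(k,m,\Delta)$ in~\eqref{eq:acrg} is, term for term, the time-averaged Rhee \& Glynn estimator $H_{B:N}$ of~\eqref{eq:unbiased-coupling} applied to the augmented Markov chain
\[
\mathcal{W}_l \coloneqq \left[Z_{(l+1)\Delta}, \frac{1}{\Delta}\int_{l\Delta}^{(l+1)\Delta} h(Z_s)\dd{s}\right]
\]
from Remark~\ref{rem:continuous-is-discrete}, together with the test function $g(x,y) = y$. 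So the first step is to make this identification precise rather than to prove anything new from scratch.

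Concretely, I would set $X_l = \mathcal{W}^1_l$, $Y_l = \mathcal{W}^2_l$, $B = k$ and $N = m+1$, so that $g(X_l) = \frac{1}{\Delta}\int_{l\Delta}^{(l+1)\Delta} h(Z^1_s)\dd{s}$ and, after the change of variables $s \mapsto s-\Delta$, $g(Y_{l-1}) = \frac{1}{\Delta}\int_{(l-1)\Delta}^{l\Delta} h(Z^2_s)\dd{s} = \frac{1}{\Delta}\int_{l\Delta}^{(l+1)\Delta} h(Z^2_{s-\Delta})\dd{s}$. Substituting these into~\eqref{eq:unbiased-coupling} and matching the weights $\min\left(1, \frac{l-k}{m-k+1}\right)$ recovers~\eqref{eq:acrg} exactly. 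The upper summation limit $\tau - 1$ of~\eqref{eq:unbiased-coupling} can be taken to be $\lfloor(\kappa+\Delta)/\Delta\rfloor$ by the same reindexing as in Proposition~\ref{prop:consistency-drg} (the discretised coupling time is $\lceil\kappa/\Delta\rceil$, inherited from the $\Delta$-coupling of $Z^1,Z^2$); any further correction terms vanish since $h(Z^1_s) - h(Z^2_{s-\Delta}) = 0$ for $s \geq \kappa + \Delta$, so the precise cut-off is immaterial.

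Having made the identification, the conclusion follows by invoking the same result used for Proposition~\ref{prop:consistency-crg} --- Theorem~1 of~\citet{middleton2020unbiased}, or equivalently~\citet{glynn_rhee_2014,rhee2015unbiased,Jacob2020unbiasedMCMC} --- applied to $(X_l, Y_l)$ and $g$. Its hypotheses are discharged precisely as before: Assumption~\ref{ass:marginal-equality} gives the equal marginals $\mathcal{L}(X_l) = \mathcal{L}(Y_l)$ of the augmented chains; Assumptions~\ref{ass:delta-coupling} and~\ref{ass:coupling-tails} give a faithful coupling whose coupling time has geometric tails, hence finite expectation; and Assumption~\ref{ass:integrability-v1} provides both the convergence $\mathbb{E}[g(X_l)] \to \pi(h)$ and the uniform $(2+\eta)$-moment bound on $g(X_l)$ that the cited theorem requires for finite variance. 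The theorem then yields directly that $\mathrm{ACRG}(k,m,\Delta)$ has mean $\pi(h)$, finite variance, and finite expected coupling time $\tau$.

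I expect no genuine obstacle beyond the bookkeeping of the index correspondence in the second step: the averaging over $l = k, \ldots, m$ is exactly what the $H_{B:N}$ estimator of~\citet{Jacob2020unbiasedMCMC} already handles, so the entire difficulty was front-loaded into Proposition~\ref{prop:consistency-crg}. The one point worth care is purely editorial --- the corollary as stated names~\eqref{eq:crg}, whereas the object it concerns is the averaged estimator~\eqref{eq:acrg} --- so I would state and prove it for~\eqref{eq:acrg}, noting that~\eqref{eq:crg} is the special case $m = k$ already covered by Proposition~\ref{prop:consistency-crg}.
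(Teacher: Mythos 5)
Your proof is correct and follows essentially the same route as the paper: the paper's (implicit) argument is exactly to view $\mathrm{ACRG}$ as the averaged estimator of~\citet{Jacob2020unbiasedMCMC} for the augmented chain $\left[Z_{(l+1)\Delta}, \frac{1}{\Delta}\int_{l\Delta}^{(l+1)\Delta} h(Z_s)\dd{s}\right]$ with test function $g(x,y)=y$, discharging the hypotheses as in Proposition~\ref{prop:consistency-crg}. Your observation that the corollary's reference to~\eqref{eq:crg} should read~\eqref{eq:acrg} is also correct --- this is a typo in the paper's statement.
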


    An interesting point to remark on is that, contrary to the estimators of
   ~\citet{Jacob2020unbiasedMCMC}, the fact that the expected coupling time is
    finite does not \emph{directly} imply that the estimators will have finite compute time. This latter point is typically a feature of the non-explosivity of the process at hand and will be the case when only a finite number of computational events can happen in a given interval of time. This will always be verified in practice: for the coupled version of a PDMP to be practical, the PDMP marginal sampling process needs to be as well.

    \section{Coupling samplers with refreshment events: the BPS and the BS}\label{sec:coupling-the-bouncy-particle-sampler}
    We now describe how we can achieve a time-synchronous coupling for the BPS and the BS, to be used in the estimators of Section~\ref{sec:unbiased-estimators-for-continuous-time-markov-processes}.
    We first quickly review the BPS and go on to describe how each of its components can be coupled. Our coupling construction reuses several elements from~\citet{durmus2020geometric} but provably achieves higher coupling probabilities. The method is then extended to BS, which only requires different handling of its deterministic dynamics.

    \subsection{The bouncy particle sampler}\label{sec:the-bouncy-particle-sampler}
    The bouncy particle sampler (BPS) is a PDMP sampler originally introduced in~\citet{peters2012rejection} for sampling from a target distribution $\pi(x) \propto \exp(-U(x))$ over $\mathbb{R}^d$, known only up to a normalising constant and differentiable.
    Formally, BPS targets the augmented distribution $\pi(x, v) \propto \pi(x) \mathcal{N}(v; 0, I)$ over the space $\mathbb{R}^d \times \mathbb{R}^d$, which recovers $\pi$ marginally.
    In order to simulate $\pi(x, v)$-stationary trajectories, BPS stitches affine segments of random length together. Suppose that the state of the sampler at the beginning of a segment is given by $(x(t_i), v(t_i))$, then $(x(t), v(t))$, $t \geq t_i$ follows the dynamics $x(t) = (t - t_i) v(t_i) + x(t_i)$, and $v(t) = v(t_i)$, corresponding to the coupled ordinary differential equation 
    \[\begin{pmatrix}
        \dot{x} \\ \dot{v}
    \end{pmatrix} = \begin{pmatrix}
                        0 & 1\\ 0 & 0
    \end{pmatrix}\begin{pmatrix}
                     x \\ v
    \end{pmatrix}.
    \]
    This is done until a random time $t_{i+1}$ defined as the first event time of a non-homogeneous Poisson process with rate $\lambda_t = \lambda(x(t), v(t)) \coloneqq \langle \nabla U(x(t)), v(t) \rangle_{+}$ (using the notation $z_+:=\max(z, 0)$).
    At time $t_{i+1}$, the velocity is reflected against a hyperplane defined by the gradient of $U$:
    \begin{equation}
        v(t_{i+1}) = R(x(t_{i+1})) v(t_{i}) = v(t_i) - 2 \frac{\langle\nabla U(x(t_{i+1})), v(t_i) \rangle}{\norm{\nabla U(x(t_{i+1}))}^2} \nabla  U(x(t_{i+1})).
    \end{equation}
    
    While this construction keeps $\pi(x, v)$ invariant, it suffers from reducibility problems~\citep{Bouchard2018BPS,deligiannidis2019exponential}, which in turn induces a lack of ergodicity of the process. In order to mitigate this, \emph{refreshment} events, where the velocity $v$ is drawn from $\mathcal{N}(0, I)$, are further added at random times defined by a homogeneous Poisson process with rate $\lref > 0$. All in all, this forms continuous piecewise-affine trajectories for $x(t)$, and c\`adl\`ag piecewise constant trajectories for $v(t)$. %
    Another way to understand the method is via the superposition theorem~\citep[see, e.g.,][Section 2.3.3]{Bouchard2018BPS}: the fastest event time of two Poisson processes with rates $\lambda_1$, $\lambda_2$ corresponds to the event rate of a Poisson process with rate $\lambda_1 + \lambda_2$. The BPS, therefore, proceeds by sampling an event with rate $\lref + \langle\nabla U(x(t)), v(t) \rangle_{+}$  and then randomly selects which one happened first. This formulation can be seen as a special case of the local BPS~\citep{Bouchard2018BPS} and is given in Algorithm~\ref{alg:bps-summary-2}.
    \begin{algorithm}[tbp]
        \caption{The bouncy particle sampler}
        \label{alg:bps-summary-2}
        \DontPrintSemicolon
        Sample $x(0), v(0) \sim \pi_0(x, v)$ \tcp{for an arbitrary initial distribution $\pi_0$.}
        Set $s = 0$\;
        \While{$\mathrm{True}$}{
            Sample $\tau \sim \PPP^{\lref + \lambda_t}_s$\;
            Set $x(t) = x(s) + (t-s)v(s)$, $t \in (s, s + \tau]$ and $v(t) = v(s)$, $t \in [s, s + \tau)$\;
            Set $s = s + \tau$\;
            \tcp{Pick which event happened}
            Sample $V \sim \mathcal{U}([0, 1])$\;
            \lIf{$V < \frac{\lambda_{s}}{\lambda_{s} + \lref}$}{
                set $v(s) = R(x(s)) v(s - \tau)$
            }\lElse{
                sample $v(s) \sim \mathcal{N}(0, I)$
            }
        }
    \end{algorithm}

    Simulating from $\PPP^{\lambda_t}_u$ for arbitrary $\lambda_t$ and $u$ is in general non-trivial, even if this represents a one-dimensional distribution. This is because the density $\ppp^{\lambda_t}_u(t) = \lambda_t\exp\big(-\int_{u}^t \lambda_s \dd{s}\big)\mathbbm{1}(t \geq u)$ is often intractable so that the corresponding quantile function will be too. Instead, when an upper bound $\bar{\lambda}_t \geq \lambda_t$ with tractable first event simulation is known, a thinning procedure is routinely employed to sample from it. We detail this in Algorithm~\ref{alg:thinning-global}, assuming that the upper bound is valid for all $t \geq 0$. When this hypothesis is not verified, an adaptive version~\citep[see, e.g.,][]{Sutton2023ConvexConcave} of Algorithm~\ref{alg:thinning-global} can be used. 
            \begin{algorithm}[!htb]
                \caption{Thinning with a global upper bound}
                \label{alg:thinning-global}
                \DontPrintSemicolon
                Set $\tau = 0$\;
                \While{$\mathrm{True}$}{
                    Sample $\bar{\tau} \sim \PPP^{t \mapsto \bar{\lambda}_{\tau + t}}_{\tau}$ and $U \sim \mathcal{U}([0, 1])$ \tcp*[f]{independently}\;
                    Set $\tau = \tau + \bar{\tau}$\;
                    \lIfReturn{$U < \lambda_{\tau} / \bar{\lambda}_{\tau}$}{$\tau$}
                }
            \end{algorithm}

    \subsection{Coupling the bouncy particle sampler}\label{subsec:synchronising-the-bps}
    We now turn back to the problem of building delayed couplings of the bouncy particle sampler. Recall that by this we mean that, given a time-delay $\Delta$, we want to form $(x^1(t), v^1(t))$ and $(x^2(t), v^2(t))$, both marginally distributed according to the same BPS, such that there exists a random time $\kappa > 0$ with
    \begin{equation}
        x^1(\kappa + \Delta + t) = x^2(\kappa + t), \quad v^1(\kappa + \Delta + t) = v^2(\kappa + t)
    \end{equation}
    for all $t \geq 0$.

    In order to achieve this $\Delta$-coupling, we will consider a time grid $k \Delta$, $k=0, 1, \ldots$, on which all information on the PDMPs at hand is forgotten beside their positions and velocities. In particular, their already computed next event will be forgotten, and we will simply regenerate them via a favourable coupling. 
    The construction we propose closely follows that of~\citet{durmus2020geometric}, who use it to derive mixing rates for BPS, but provably results in lower coupling times than theirs. We will highlight the differences and the reason why these result in lower coupling times for the PDMPs at hand.

    We decompose this problem in two steps: (i) state matching, which is already present in \citet{durmus2020geometric}, and (ii) time synchronisation, which allows us to still try and couple when the method of \citet{durmus2020geometric} fails.

    \subsubsection{The coupling construction of \texorpdfstring{\citet{durmus2020geometric}}{Durmus et al. (2020)}}\label{subsub:state-coupling}
        In this section, we adapt the coupling of \citet{durmus2020geometric} to implement a $\Delta$-coupling of the bouncy particle sampler. This means that, contrary to their construction, we instead consider lagged BPS processes, rather than fully synchronous ones. Furthermore, while they simply perform the operation once to show a positive coupling probability (provided that enough time has passed), we will apply this coupling repeatedly (every $\Delta$ increment of time), leveraging the Markov property of the BPS to ``forget'' future event times values.

        Consider two BPS processes $(x^1(t), v^1(t))$ and $(x^2(t), v^2(t))$, marginally following the same distribution, and define $t_k = k \Delta$ for a given $\Delta > 0$. We will write $\lambda^i_t$, $i=1, 2$, for the corresponding bouncing event rate.
        At time $t_{k+1}$ for $(x^1, v^1)$ and $t_k$ for $(x^2, v^2)$, we implement the following procedure:
        \begin{enumerate}
            \item For both processes, use the same homogeneous Poisson process to sample from the refreshment event times.
            \item Generate independently the bouncing event times $\tbounce^1$ and $\tbounce^2$.
            \item If both processes' next event is a refreshment event, then sample from a coupling of the velocity process to match either their positions (if their positions are not matched yet), or their velocities (if the positions are already matched).
            \item Repeat until you have reached the next $\Delta$ increment.
        \end{enumerate} 
        The coupling procedure is then successful if position and velocity are coupled \emph{within the allocated} time-frame $\Delta$. And if it is not, the same procedure is applied during the next $\Delta$-window $[t_{k+1} + \Delta, t_{k+1} + 2\Delta]$ (or $[t_{k} + \Delta, t_{k} + 2\Delta]$ for the second process). 
        The construction is detailed in Algorithm~\ref{alg:durmus-coupling}.
        \begin{algorithm}[tbp]
        \caption{Coupling of \citet{durmus2020geometric}}
        \label{alg:durmus-coupling}
        \DontPrintSemicolon
        \LinesNumbered
        Set $f_x = f_v = 0$ \tcp{coupling flags, the processes are $\Delta$-coupled when $f_v = f_x = 1$}
        Sample $\tref \sim \mathrm{Exp}(\lref)$, $\tbounce^1 \sim \PPP^{\lambda_t^1}_{t_{k+1}}$ and $\tbounce^2 \sim \PPP^{\lambda_t^2}_{t_{k}}$ independently\; \label{line:poisson-ref-1}
        Set $s^1 = t_{k+1}$, and $s^2 = t_{k}$ \;
        \While{true}{
            Set $\tau^1 = \min(\tref, \tbounce^1)$ and $\tau^2 = \min(\tref, \tbounce^2)$\;
            \lIf{$\max(s^1 + \tau^1 - t_{k+1}, s^2 + \tau^2- t_k) \geq \Delta$}
            {\textbf{break}}
            Set $x^1(t) = x^1(s^1) + (t - s^1)v^1(s^1)$ for $t \in [s^1, s^1 + \tau^1]$,  $v^1(t) = v^1(s^1)$ for $t \in [s^1, s^1 + \tau^1)$\;
            Set $x^2(t) = x^2(s^2) + (t - s^2)v^2(s^2)$ for $t \in [s^2, s^2 + \tau^2]$,  $v^2(t) = v^2(s^2)$ for $t \in [s^2, s^2 + \tau^2)$\;
            Set $s^1 = s^1 + \tau^1$, and $s^2 = s^2 + \tau^2$ \;
            \uIf(\tcp*[f]{The coupling failed}){$\tref > \min(\tbounce^1, \tbounce^2)$}{\label{line:coupling-failed}
                Advance both processes independently up to time $t_{k+1} + \Delta$, $t_k + \Delta$, respectively\;
            }
            \uElse{
                Sample $\tref \sim \mathrm{Exp}(\lref)$ \tcp{Time to next refreshment event}\label{line:poisson-ref-2}
                Sample $u^1, u^2$ from a maximal coupling of $\mathcal{N}(x^1(s^1), (\tref)^2 I)$, $\mathcal{N}(x^2(s^2), (\tref)^2 I)$\; \label{line:coupling-vels}
                \If{The coupling is successful}{
                    Set $f_v = 1 * f_x$ \tcp{Velocities are coupled}
                    Set $f_x = 1$ \tcp{Positions are coupled}
                }
                \Else{
                    Set $f_v = f_x = 0$ \tcp{Velocities and positions are decoupled}
                }
                Set $v^1(s^1) = \left(u^1 - x^1(s^1)\right) / \tref$, $v^2(s^2) = \left(u^2- x^2(s^2)\right) / \tref$\;
                \lIf(\tcp*[f]{Coupling was successful}){$f_v = f_x = 1$}{\textbf{break}}
                Sample $\tbounce^1 \sim \PPP^{\lambda_t^1}_{s^1}$ and $\tbounce^2 \sim \PPP^{\lambda_t^2}_{s^2}$ independently\;
            }
        }
        Continue sampling the two processes up to time $t_{k+1} + \Delta$ and $t_k + \Delta$ respectively, either using a single process if $f_v = f_x = 1$ or independent ones otherwise.
    \end{algorithm}
    In practice, the only remaining implementation choice consists of the coupling for the velocities at line~\ref{line:coupling-vels}. Because this corresponds to coupling two Gaussians with different means, but the same covariance matrix (scaled identity), we can employ the Gaussian coupling of~\citet{Bou2020coupling}, also typically used in the methods of~\citet{Jacob2020unbiasedMCMC} who call it the ``Reflection-maximal'' coupling. This coupling is given in Algorithm~\ref{alg:reflection-maximal}.
    \begin{algorithm}[tbp]
            \caption{Reflection-maximal coupling for $\mathcal{N}(m_1, \sigma^2 I)$ and $\mathcal{N}(m_2, \sigma^2 I)$}
            \label{alg:reflection-maximal}
            \DontPrintSemicolon
            Set $z = (m_1 - m_2) / \sigma$ and $e = z / \norm{z}$\;
            Sample $V \sim \mathcal{N}(0, I)$ and $U \sim \mathcal{U}([0, 1])$ \tcp*[f]{Independently}\;
            \uIf{
                $\mathcal{N}(V; 0, I) \, U < \mathcal{N}(V + z; 0, I)$\;
            }{
                Set $W = V + z$\;
            }
            \uElse{
                Set $W = V - 2 \left\langle e, V \right\rangle e$\;
            }
            Set $x = m_1 + \sigma V$ and $y = m_2 + \sigma W$\;
            \textbf{return} $(x, y)$
        \end{algorithm}
    
    \subsubsection{Improving on \texorpdfstring{\citet{durmus2020geometric}}{Durmus et al. (2020)}}\label{subsub:improving-durmus}
        In practice, the failure condition $\tref > \min(\tbounce^1, \tbounce^2)$ appearing in Algorithm~\ref{alg:durmus-coupling} line~\ref{line:coupling-failed} is too strict. It appears because the probability of having $\tbounce^1 = \tbounce^2$ is null under the independent sampling hypothesis, making it so that, as soon as the processes decouple in time, by which we mean that $\tau^1 \neq \tau^2$, the coupling is doomed to fail. As a consequence, increasing $\Delta$ will not necessarily result in sizeable increases in the resulting coupling probability. This was not a problem in \citet{durmus2020geometric} as they were interested in proving a positive probability for a ``one-shot'' type of coupling, but becomes one in our case given that we wish to repeat the procedure every $\Delta$ time increment until a successful coupling eventually happens. This issue can be corrected at a low cost, by allowing the process to recouple in time when time synchronisation had been lost.
        In particular, rather than coupling the refreshment events by using the same Poisson process with rate $\lref$ at lines~\ref{line:poisson-ref-1} and~\ref{line:poisson-ref-2}, we can instead sample from a coupling of $\mathrm{Exp}(\lref)$ with itself, maximising the probability 
        \begin{equation}\label{eq:sync-ref}
            \mathbb{P}(s^1 + \tref^1 = s^2 + \tref^2 + \Delta),
        \end{equation}
        noting that this corresponds to the same procedure as in Algorithm~\ref{alg:durmus-coupling} when $s^1 = s^2$.
        This corresponds to sampling from a maximal coupling of shifted Exponential distributions, which is easily done in closed-form using the mixture representation of maximal couplings~\citep[Theorem 19.1.6 in][see Appendix~\ref{app:coupling-tractable} for details]{Douc2018MC}. 
        As a consequence, we can ignore the failure condition line~\ref{line:coupling-failed} altogether given that the process is allowed to ``re-synchronise'' even in case of a de-synchronisation. This means that taking a larger $\Delta$ window is less detrimental than it would have been otherwise. Additionally, even during a ``re-synchronisation'' procedure, it is possible to try to couple the positions of the two samplers as in line~\ref{line:coupling-vels} in Algorithm~\ref{alg:durmus-coupling}, albeit, in this case, this will correspond to a coupling of two Gaussians $\mathcal{N}(x^1(\tau^1), (\tref^1)^2 I)$, $\mathcal{N}(x^2(\tau^2), (\tref^2)^2 I)$ with different covariance matrices, for which maximal couplings are not known~\citep{devroye2022total,corenflos2022rejection} in general. Thankfully, this special case is tractable, given that the covariances are proportional to one another, so we can again use the mixture representation of~\citet[][Theorem 19.1.6]{Douc2018MC} to sample from a maximal coupling thereof. 
        
        While this is already an improvement over~\citet{durmus2020geometric}, we can further enhance the coupling procedure by sampling from a coupling of $\tbounce^1$ and $\tbounce^2$, rather than independently. Similar to~\eqref{eq:sync-ref}, we need to target having a positive probability for the event
        \begin{equation}\label{eq:sync-bounce}
            \mathbb{P}(s^1 + \tbounce^1 = s^2 + \tbounce^2 + \Delta).
        \end{equation}
        This is more complicated than for the refreshment events, given the typical intractability of $\PPP^{\lbounce^i}_{s^i}$, $i=1, 2$. However, when these are simulated using the thinning procedure given by Algorithm~\ref{alg:thinning-global} (or an adaptive version), thanks to upper bounds $\PPP^{\bar{\lambda}^1}_{s^1}$ and $\PPP^{\bar{\lambda}^2}_{s^2}$, it is possible to use a modified version of the coupled rejection sampling method of \citet{corenflos2022rejection} to couple the resulting bouncing event times. The method is summarised in Algorithm~\ref{alg:coupled-thinning}.
        \begin{algorithm}[tbp]
                \caption{Coupled thinning}
                \label{alg:coupled-thinning}
                \DontPrintSemicolon
                \LinesNumbered
                \tcp{Suppose given a diagonal coupling $\Gamma$ of $\PPP_u^{\bar{\lambda}^1}$ and $\PPP_u^{\bar{\lambda}^2}$.}
                Set $\tau^1 = \tau^2 = 0$\;
                \While{$\mathrm{True}$}{
                    Sample $\bar{\tau}^1, \bar{\tau}^2 \sim \Gamma$\ and set $\tau^1 = \tau^1 + \bar{\tau}^1$ and $\tau^2 = \tau^2+ \bar{\tau}^2$\;
                    Sample $U \sim \mathcal{U}([0, 1])$ and set $\mathrm{a}_1 = \lambda^1_{\tau^1} / \bar{\lambda}^1_{\tau^1}$ and $\mathrm{a}_2 = \lambda^2_{\tau^2} / \bar{\lambda}^2_{\tau^2}$\;
                    \lIf{$U< \mathrm{a}_1 \wedge \mathrm{a}_2$}
                    {
                        \textbf{return} $(\tau^1, \tau^2)$\label{line:thinning-successful}
                    }\uElseIf{
                        $U < \mathrm{a}_1$ 
                    }{
                        Continue sampling $\tau^2$ via Algorithm~\ref{alg:thinning-global} then \textbf{return} $(\tau^1, \tau^2)$
                    }\uElseIf{
                        $U < \mathrm{a}_2$
                    }{
                        Continue sampling $\tau^1$ via Algorithm~\ref{alg:thinning-global} then \textbf{return} $(\tau^1, \tau^2)$
                    }
                }
            \end{algorithm}
            Thankfully, the upper bounds $\PPP^{\bar{\lambda}^1}_{s^1}$ and $\PPP^{\bar{\lambda}^2}_{s^2}$ will be amenable to coupling as soon as they have a tractable density function, which is the case for, for example, piecewise polynomial bounds~\citep{Sutton2023ConvexConcave}. This can be done via Thorisson's algorithm~\citep{Thorisson2000Coupling} that we presented in Algorithm~\ref{alg:coupling-thorisson}.
            We note however that, contrary to the approach of Appendix~\ref{app:coupling-tractable}, Thorisson's algorithm presents the disadvantage of resulting in random execution times for the simulation of the overarching coupling itself, adding to that of the thinning procedure. For this reason, when they are practical, closed-form solutions should be preferred. Moreover, a drawback of Thorisson's algorithm is the fact that, despite the resulting coupling being maximal, the joint distribution of the coupling when the coupling is not successful (i.e., the distribution of the residual coupling) is independent: in terms of density, $p(x, y \mid x \neq y) = p(x \mid x \neq y) p(y \mid x \neq y)$. This means that the algorithm will behave erratically before we observe a coupling. For these two reasons, we instead use a modified antithetic coupling~\citep{hammersley1956new}, following a construction given in~\citet[Algorithm 14, Section D.2]{Dau2023complexity}. The procedure is given in Algorithm~\ref{alg:coupling-modified-crn}\footnote{We note that Algorithm~\ref{alg:coupling-modified-crn} can in fact be used to transform more or less any coupling into one that gives a positive mass on the diagonal $x=y$, however, for simplicity, we only state it here in terms of the antithetic coupling.}
            and more details on the choice of the antithetic residuals can be found in Appendix~\ref{app:coupling-tractable}. 
            \begin{algorithm}[tbp]
                \caption{Modified antithetic coupling for distributions with densities $p^1$ and $p^2$}
                \label{alg:coupling-modified-crn}
                \DontPrintSemicolon
                Sample $X, Y$ from an antithetic coupling of $p^1$ and $p^2$\; 
                Sample $U, V \sim \mathcal{U}([0, 1])$ and $Z \sim p_1$\tcp*[f]{Independently}\;
                \uIf{$V < p_2(Z) / p_1(Z)$}{
                    \lIf{$U < p_2(X) / p_1(X)$}{set $X = Z$}
                    \lIf{$U < p_1(Y) / p_2(Y)$}{set $Y = Z$}
                }
                \textbf{return} $(X, Y)$
            \end{algorithm}
            Because this modified coupling relies on an underlying antithetic coupling, we can expect that its distribution will be such that, when $X \neq Y$, we will often obtain a contraction behaviour (similar to the reflection of Algorithm~\ref{alg:reflection-maximal}) inherited from the antithetic coupling rather than fully independent samplers, thereby reducing the variance of the algorithm. More details on this can be found in Appendix~\ref{app:coupling-tractable}.
            
        Overall, combining these two improvements results in stochastically smaller coupling times than simply using the method of Section~\ref{subsub:state-coupling} by construction, while at the same time incurring virtually no additional computation compared to it.

    \subsection{Coupling the boomerang sampler}\label{sec:coupling-bs}
        The BS~\citep{bierkens2020boomerang} is very similar to the BPS, in that it shares the same event rates, and only differs via its bouncing and refreshment events and the trajectories between events.
        Formally, the BS targets an augmented distribution with density
        \begin{equation}\label{eq:boomerang-target}
            \pi(x, v) \propto \exp(-U(x) - \frac{1}{2}(x - x^*)^{\top} \Sigma^{-1} (x - x^*) - \frac{1}{2} v^{\top} \Sigma^{-1} v),
        \end{equation}
        or, otherwise said, a distribution with Radon--Nikodym derivative $\exp(-U(x))$ with respect to a reference Gaussian measure $\mathcal{N}(x^*, \Sigma) \otimes \mathcal{N}(0, \Sigma)$, this latter perspective being useful when designing samplers for infinite-dimensional spaces~\citep{dobson2022infinite}.

        Its refreshment event is given by $v(t_{i+1}) \sim \mathcal{N}(0, \Sigma)$,
        while its bouncing event is given by 
        \begin{equation}
        v(t_{i+1}) = R(x(t_{i+1})) v(t_{i}) = v(t_i) - 2 \frac{\langle\nabla U(x(t_{i+1})), v(t_i) \rangle}{\norm{\Sigma^{1/2}\nabla U(x(t_{i+1}))}^2} \nabla \Sigma U(x(t_{i+1})),
        \end{equation}
        and the trajectories between events are given by 
        $\begin{pmatrix}
        \dot{x} \\ \dot{v}
        \end{pmatrix} = \begin{pmatrix}
                            0 & 1\\ -1 & 0
        \end{pmatrix}\begin{pmatrix}
                         x \\ v
        \end{pmatrix} + \begin{pmatrix}
                         0 \\ x^*
        \end{pmatrix}$, which can be integrated in closed form as 
        \begin{equation}
        \begin{split}
            x(t) &= x^* + (x(0) -x^*) \cos(t) + v(0) \sin(t)\\
            v(t) &= -(x(0) -x^*) \sin(t) + v(0) \cos(t).
        \end{split}
        \end{equation}

        Because of this similarity, coupling the BS is essentially done the same way as coupling the BPS, and, to perform state coupling, it suffices to follow the same steps as in Section~\ref{subsub:state-coupling}, whilst adapting them to the trajectories at hand.

        Suppose that the current event at time $t_1$ and $t_2$ is a refreshment event and that the next events at times $t_1 + \tau_1$ and $t_2 + \tau_2$, $\tau_1=\tau_2 = \tref$ are both refreshment events too, then, to couple the positions $X^1_{t_1 + \tau_1}$ and $X^2_{t_2 + \tau_2}$ we want to form a coupling of the $\mathcal{N}(0, \Sigma)$ distributed velocities $V^1$ and $V^2$ maximising the probability of obtaining
        \begin{multline*}
                 x^* + (X^1_{t_1} - x^*) \cos(\tref) + V^1_{t_1} \sin(\tref) = x^* + (X^2 - x^*) \cos(\tref) + V^2_{t_2} \sin(\tref) \\
                \Longleftrightarrow \quad X^1_{t_1} \cos(\tref) + V^1_{t_1} \sin(\tref) = X^2_{t_2}\cos(\tref) + V^2_{t_2} \sin(\tref)
        \end{multline*}
        which corresponds to forming a coupling of the two Gaussian distributions $\mathcal{N}(\cos(\tref) X^1_{t_1}, \sin(\tref)^2\Sigma)$ and $\mathcal{N}(\cos(\tref) X^2_{t_2}, \sin(\tref)^2\Sigma)$ with different means and the same covariance. This is also easily done via Algorithm~\ref{alg:reflection-maximal}.

        The final coupled BS algorithm is then given by the same procedure as for BPS, albeit for different state coupling equations and bouncing events, and similar improvements as in Section~\ref{subsub:improving-durmus} apply here too. 
        
    \subsection{Verifying the assumptions for BPS}\label{subsec:assump-PBS}
        We now focus on verifying that the consistency Propositions~\ref{prop:consistency-drg},~\ref{prop:consistency-crg}, and their corollaries hold for the bouncy particle sampler.
        An important remark, that we use throughout, is the fact that the coupling construction of Section~\ref{sec:coupling-the-bouncy-particle-sampler} is at least as contractive as that of \citet{durmus2020geometric}. This is because, while they consider time synchronisation through the refreshment events only, and ``give up'' if a bouncing event occurs, we instead allow for re-synchronisation of the processes, which results in a higher probability for $\mathbb{P}_{z^1, z^2}\left(\tau^1 = \tau^2\right)$, so that all their inequalities involving this quantity, in particular, are verified in the case of our coupling. As a consequence, all their results still apply to our particular construction.
        
    \begin{proposition}[]
    \label{prop:bps-coupling-time}
         
         Suppose the following conditions on $U(x):=\log\pi(x)$ hold:
         (a) $U$ is twice differentiable; 
         (b) $(x,v)\rightarrow\|v\|\|\nabla U(x)\|$;
         is integrable with respect to the extended target $\bar{\pi}(x,v)=\pi(x)\mathcal{N}(y;0,I)$;
         (c) $U(x)>0$; 
         (d) $\int\exp\left\{ -U(x)/2\right\} dx<\infty$; 
         (e) $\|U(x)\|\rightarrow+\infty$ as $\|x\|\rightarrow+\infty$;
         and (f) there exists $c\in(0,1)$ such that 
         \begin{align*}
             \lim\inf_{\|x\|\rightarrow+\infty}\left\{ \|\nabla U(x)\|/U^{1-c}(x)\right\} &>0,\\ 
             \lim\sup_{\|x\|\rightarrow+\infty}\left\{ \|\nabla U(x)\|/U^{1-c}(x)\right\} &<+\infty\\ \lim\sup_{\|x\|\rightarrow+\infty}\left\{ \|\nabla^{2}U(x)\|/U^{1-2c}(x)\right\} &<+\infty.
         \end{align*} 
    
    Then there exists $\bar{\Delta} > 0$, such that, for any $\Delta > \bar{\Delta}$, the $\Delta$-coupled BPS of Section~\ref{sec:coupling-the-bouncy-particle-sampler} verifies Assumption~\ref{ass:coupling-tails}
            \end{proposition}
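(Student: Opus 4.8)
The plan is to reduce the claim to a standard drift-plus-minorisation argument applied to the $\Delta$-skeleton of the coupled process, exploiting the fact, emphasised just before the statement, that our coupling is at least as contractive as the one of \citet{durmus2020geometric}. Since $(Z^1_{k\Delta}, Z^2_{k\Delta})_{k\ge0}$ is a bivariate Markov chain and $\kappa$ equals, up to the factor $\Delta$ and a ceiling, the coupling time of this skeleton, geometric tails for the skeleton coupling time transfer immediately to $\kappa$ (exactly as in the proof of Proposition~\ref{prop:consistency-drg}); it therefore suffices to establish geometric tails for the discrete chain.

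First I would invoke the geometric ergodicity analysis of \citet{durmus2020geometric}: conditions (a)--(f) are precisely those under which they establish a Foster--Lyapunov drift for BPS, with a Lyapunov function $\mathcal{V}(x,v)$ and a compact small set $\mathcal{C}$. Transported to the skeleton, this yields constants $\gamma\in(0,1)$ and $K<\infty$ with $\mathbb{E}[\mathcal{V}(Z_{(k+1)\Delta})\mid Z_{k\Delta}=z]\le \gamma\,\mathcal{V}(z)+K$, and hence, for $W(z^1,z^2)=\mathcal{V}(z^1)+\mathcal{V}(z^2)$, an analogous drift of the coupled skeleton toward $\mathcal{C}\times\mathcal{C}$, so that the pair returns to $\mathcal{C}\times\mathcal{C}$ with geometrically controlled return times.

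Second, and this is the crux, I would establish a uniform minorisation: there is $\epsilon>0$ such that whenever $(z^1,z^2)\in\mathcal{C}\times\mathcal{C}$, a single $\Delta$-window of our coupling succeeds (both positions and velocities match, i.e. $f_x=f_v=1$) with probability at least $\epsilon$. This is where the threshold $\bar\Delta$ enters: the window must be long enough that, with probability bounded below on $\mathcal{C}$, both processes undergo only refreshment events, that enough such refreshments occur to first align the positions through the reflection-maximal coupling of Algorithm~\ref{alg:reflection-maximal}, and then to match the velocities. \citet{durmus2020geometric} prove exactly such a one-shot positive-probability coupling on compacts for a sufficiently long horizon; since our construction only adds the option of re-synchronising in time after a desynchronisation, it can only increase this probability, so the bound is inherited. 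Taking $\bar\Delta$ equal to their required horizon makes the minorisation valid for every $\Delta>\bar\Delta$.

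Finally I would combine the two ingredients in the classical manner (a Rosenthal-type bound, or directly the coupling results underlying Theorem~1 of \citet{middleton2020unbiased}): the drift forces the coupled skeleton to revisit $\mathcal{C}\times\mathcal{C}$ geometrically often, each visit offering an independent success probability at least $\epsilon$, so the number of windows until coupling is stochastically dominated by a geometric random variable and $\mathbb{P}(\kappa>t)<a\exp(-bt)$ follows. I expect the second step to be the main obstacle: transcribing the one-shot estimate of \citet{durmus2020geometric} to the \emph{asynchronous, lagged} setting, and rigorously justifying the stochastic domination of our coupling over theirs --- asserted in the text but used here to import all their inequalities --- requires care, in particular verifying that the re-synchronisation and coupled-thinning steps preserve the correct marginals while not decreasing the quantity $\mathbb{P}_{z^1,z^2}(\tau^1=\tau^2)$ on which the contraction estimate rests.
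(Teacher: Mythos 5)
Your proposal is correct and takes essentially the same route as the paper's proof: reduce to the $\Delta$-skeleton and its coupling time $\lceil \kappa/\Delta\rceil$, import the Foster--Lyapunov drift from Corollary 9 of \citet{durmus2020geometric} and the one-shot compact-set coupling probability from their Lemma 14 (valid for $t \geq t_0$, which is exactly where $\bar{\Delta}$ comes from, and inherited by our coupling through the stochastic-domination remark), and then convert drift plus diagonal minorisation into geometric tails. The only cosmetic difference is that the paper outsources this final conversion to Proposition 4 of \citet{Jacob2020unbiasedMCMC} rather than redoing a Rosenthal-type argument by hand, which is why it also explicitly verifies the $\tilde{\pi}$-invariance, $\varphi$-irreducibility and aperiodicity of the skeleton kernel (via \citet{durmus2021piecewise} and Theorem 1 of \citet{Bouchard2018BPS}) and checks that $\mathcal{C}=\{z: V(z)\leq L\}$ is compact and hence small --- bookkeeping steps your sketch leaves implicit.
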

    
        The proof of this proposition (which may be found in the Appendix, Section
        \ref{app:proofs}) amounts to using intermediate results of
        \citet{durmus2020geometric} on the geometric ergodicity of the BPS to show
        that the conditions of Proposition 4 in \citet{Jacob2020unbiasedMCMC} hold.
        
        Whilst the conditions on the potential given in Proposition~\ref{prop:bps-coupling-time} may seem complicated, they are essentially saying that it needs to be a perturbation of a homogeneous function. These conditions are typically easier to verify than most conditions arising for discrete Markov chains samplers, for which one needs to explicitly construct drift conditions and study the behaviour of the coupling on small sets~\citep[Proposition 4]{Jacob2020unbiasedMCMC}. A key ingredient for the simplicity of the condition for the BPS is the fact that all compact sets are small~\citep[Lemma 2 and 10 in][respectively]{deligiannidis2019exponential, durmus2020geometric} for this class of potentials.
        
        \begin{remark}
            Proposition~\ref{prop:bps-coupling-time} ensures that our procedure results in well-behaved estimators for a large enough $\Delta$, its minimal possible value depending on the problem at hand. While this may seem restrictive, we have not seen this to appear empirically, and the estimators of Sections~\ref{subsec:discrete-bis} and~\ref{subsec:time-integrated} will still make use of intermediate trajectory information. 
        \end{remark}        

\section{Coupling the coordinate sampler}\label{sec:coupling-the-coordinate-sampler}

We now introduce a coupling of the coordinate sampler (CS). First, the specifics of CS are introduced, after which we describe how it can be coupled by successive applications of time-synchronisation.

        \subsection{The coordinate sampler}\label{subsec:CS}
            Given a target distribution $\pi(x) \propto \exp(-U(x))$ on $\mathbb{R}^d$, the CS introduced in~\citet{wu2019coordinate} considers the augmented target distribution $\pi(x, v) = \pi(x) \varphi(v)$, where $\varphi$ is the uniform distribution on the set $\mathcal{V}_{\textrm{CS}} \coloneqq \{\pm e_i, i=1, \ldots, d\}$ and $e_i$ denotes the $i$'th element of the Euclidean basis of $\mathbb{R}^d$. In order to sample from this augmented distribution, let $Z_t = (X_t, V_t)$ be the stochastic process defined by the following elements:
            \begin{enumerate}
                \item Flow: $\dv{X_t}{t} = V_t$, $\dv{V_t}{t} = 0$,
                \item Rate: $\lambda_t = \lambda(X_t, V_t) = \langle V_t, \nabla U(X_t)\rangle_+ + \lref$, where $\lref > 0$,
                \item Event: at jump time, the new velocity $v$ is sampled with probability $w_{v} \propto \lambda(X_t, -v)$.
            \end{enumerate}
            This corresponds to Algorithm~\ref{alg:coordinate-summary}.
            \begin{algorithm}[htb]
                \caption{The Coordinate Sampler}
                \label{alg:coordinate-summary}
                \DontPrintSemicolon
                Sample $x(0), v(0) \sim \pi_0(x, v)$ \tcp{from an arbitrary initial distribution $\pi_0$.}
                Set $s = 0$\;
                \While{$\mathrm{True}$}{
                    Sample $\tau \sim \PPP^{\lambda_t}_s$\;
                    Set $x(t) = x(s) + (t-s)v(s)$, $t \in [s, s + \tau]$ and $v(t) = v(s)$, $t \in [s, s + \tau)$\;
                    Set $s = s + \tau$\;
                    Sample $v(s) \sim \mathrm{Cat}(w_v)$ with $w_v\propto \lambda(x(s), -v)$\;
                }
            \end{algorithm}
            Provided that $\lref$ is positive, this is guaranteed to be ergodic for a large class of target distributions~\citep[Theorem 4]{wu2019coordinate}.

        \subsection{Coupling the coordinate sampler}\label{subsec:coupling-CS-bad}
            The only sources of stochasticity in the coordinate sampler are given by its jump times and the renewal process of the velocity. In order to form a $\Delta$-coupling of the CS with itself, this limited flexibility restricts us to working on the stochastic time and the velocity space only. 

            Suppose we are given two coordinate samplers processes $(X_{t^1}^1, V_{t^1}^1)$ and $(X_{t^2}^2, V_{t^2}^2)$ for the same target distribution $\pi(x, v)$ and denote by $X^i_{t^i}(j)$ and $V^i_{t^i}(j)$ the value of the $j$'th component of the position and velocity states for the process $i$ at time $t^i$. We also write $\lambda^1_t$ and $\lambda^2_t$ for the corresponding event rates and $\tau^1$, $\tau^2$ for the resulting durations until the next event. 
            Let us ignore for the moment the fact that we want to construct a $\Delta$-coupling and suppose that we only wish to form a position and velocity coupling no matter the lag.
            
            For the sake of exposition, we tackle first the case when $d = 1$ and we suppose that the stochastic times $t^1 = t^2 = t$ are the same. In this case, we either have 
            $V^1_t(1) = V^2_t(1)$ or $V^1_t(1) = -V^2_t(1)$. We can couple the position of the next event by trying to maximise the probability of the event 
            \begin{equation}\label{eq:sync-CS}
                \mathbb{P}(X^1_{t + \tau_1} = X^2_{t + \tau_2}) 
                    = \mathbb{P}(X^1_t + \tau_1 V_t^1 = X^2_t + \tau_2 V_t^2).
            \end{equation}
            Otherwise said, we want to sample from a diagonal coupling of the next events of non-homogeneous Poisson processes with densities 
            \begin{equation}
                p^1(x \mid X^1_t, V_t^1) = \ppp^{\lambda^1_s}_t\left((x - X^1_t) / V_t^1\right), \quad p^2(x \mid X^2_t, V_t^2) = \ppp^{\lambda^2_s}_t\left((x - X^2_t) / V_t^2\right),
            \end{equation}
            noting that $(x - X^i_t) / V_t^i$ is commensurable to the time elapsed.
            This coupling will have a chance to occur in the following cases:
            \begin{enumerate}
                \item $V^1_t = V^2_t=1$ with maximal probability $\int_{X^1_t \vee X^2_t}^{\infty} \ppp^{\lambda^1_s}_t\left(x - X^1_t\right) \wedge \ppp^{\lambda^2_s}_t\left(x - X^2_t\right) \dd{x}$;
                   \item $V^1_t = V^2_t=-1$ with maximal probability $\int^{X^1_t \wedge X^2_t}_{-\infty} \ppp^{\lambda^1_s}_t\left(X^1_t - x\right) \wedge \ppp^{\lambda^2_s}_t\left( X^2_t- x\right) \dd{x}$;
                \item $V^1_t = -1$ and $V^2_t = 1$ when $X^1_t > X^2_t$ with maximal probability $\int_{X^2_t}^{X^1_t} \ppp^{\lambda^1_s}_t\left(X^1_t - x\right) \wedge \ppp^{\lambda^2_s}_t\left(x - X^2_t\right) \dd{x}$
                \item $V^1_t = 1$ and $V^2_t = -1$ when $X^1_t < X^2_t$ with maximal probability $\int_{X^1_t}^{X^2_t} \ppp^{\lambda^1_s}_t\left(x - X^1_t\right) \wedge \ppp^{\lambda^2_s}_t\left(X^2_t - x\right) \dd{x}$
            \end{enumerate}
            
            As a matter of fact, sampling from such couplings can be done via coupled thinning as in Section~\ref{subsub:improving-durmus}. In this case, when the coupling is successful, the two processes are then evolved up to $t + \tau^1$ and $t + \tau^2$ respectively, and share the same position $X^1_{t + \tau^1} = X^2_{t + \tau^2}$ and their velocities need to be updated. We can now try and couple the velocities $V^1_{t + \tau^1}$ and $V^2_{t + \tau^2}$ by sampling from a maximal coupling of categorical distributions with weights $W^1_v \propto \lambda(X^1_{t + \tau^1}, -v)$ and $W^2_v \propto \lambda(X^2_{t + \tau^2}, -v)$ which is easily done in this case (see Algorithm~\ref{alg:coupling-discrete}). In fact, because we only have one component, this will always succeed given that $X^1_{t + \tau^1} = X^2_{t + \tau^2}$, and the two processes will then be $(\tau^1 - \tau^2)$-coupled in the sense of Definition~\ref{def:async-coupling}.

            This construction extends directly to the multidimensional case by considering exactly the same procedure, but component by component: when the two velocities are active for the same component, that is, when $V^1 = \pm V^2 = \pm e_k$, we can try to couple the component $k$. The coupling will then be final when all components have been coupled. We note that this method may result in some components decoupling from time to time, and as a consequence will take more than $d$ successive events to successfully finish.

            It may seem that we can now apply the methods of Section~\ref{sec:unbiased-estimators-for-continuous-time-markov-processes} to this construction. However, an important point to notice is that the value of $\tau^1 - \tau^2$ is \emph{not known in advance}, and could be negative, in which case the estimators~\eqref{eq:drg} and~\eqref{eq:crg} will not be well-defined. We, therefore, need a way to synchronise the two processes for a given $\Delta > 0$. This is the topic of the next section.
        \subsection{Forming \texorpdfstring{$\Delta$}{Δ}-couplings for the coordinate sampler: the on-and-off coordinate sampler}\label{subsec:delta-coupling-CS}
            We now consider the case when $\Delta > 0$ is given and we want to form a $\Delta$-coupling of a multidimensional CS targeting $\pi(x, v)$. In order to do so, we need to introduce a slight generalisation of the CS process where we allow the velocity to be completely switched off. We call this the on-and-off CS (I/O-CS). Formally, we now consider $\pi(x, v) = \pi(x) \varphi(v)$, where $\varphi$ is the uniform distribution on the set $\mathcal{V}_{\textrm{I/O-CS}} \coloneqq \mathcal{V}_{\textrm{CS}} \cup \{0\}$. The rest of the construction of CS is left unchanged. 
            It is easy to see that the proof of Lemma 1 in~\citet{wu2019coordinate} is still valid with this modification and $\pi(x, v)$ is invariant under the modified process. Other guarantees on the ergodicity of the process can be obtained. We list proof of these in Appendix~\ref{app:i-o-cs}.
            We can now apply the coupling method described in the previous
            section to the $\pi$-invariant $I/O-CS$ process, and couple the event times of the two processes $Z^1_{t^1} = (X^1_{t^1}, V^1_{t^1})$ and $Z^2_{t^2} = (X^2_{t^2}, V^2_{t^2})$, now supposed to start at different times $t^1$ and $t^2$, in the following way:
            \begin{enumerate}
                \item When $V^1_{t^1} = 0$ and $V^2_{t^2} = 0$, we couple the next event time $\tau^1$ and $\tau^2$ to maximise $\mathbb{P}(t^1 + \tau^1 + \Delta = t^2 + \tau^2)$. This is easily done because, in this case, the event times follow a homogeneous Poisson process with rate $\lref > 0$. We call this a $\Delta$-synchronisation.
                \item Otherwise, we use the coupling of Section~\ref{subsec:coupling-CS-bad}. We call this state synchronisation.
            \end{enumerate}
            This will result in a $\Delta$-coupling of $Z^1$ and $Z^2$ provided that a $\Delta$-synchronisation event occurs successfully after all the states have been synchronised.            

            \begin{remark}
                An important point to notice is that the construction of the I/O-CS is not equivalent to randomly freezing the coordinate sampler for a given (random time) independent of the process as doing so would have likely resulted in a non-ergodic process. Rather, the amount of time it stays frozen depends on the value of the process at hand.
                If $\nabla U(x)$ is large, then there is a very low probability of the next event being a switch to the null velocity, and when it does switch to it, the next velocity update will happen quickly. As a consequence, the system will only stay frozen when $\nabla U(x)$ is close to being $0$, otherwise said, when the process is exploring a mode, in which case, it will spend $1/(2d+1)$ of the time frozen, in average. 
            \end{remark}

        \subsection{The coupled on-and-off coordinate sampler algorithm}\label{subsec:coupled-cs-final}
            In order to finalise the construction of our I/O-CS $\Delta$-coupling, we only need to describe how the sampling of the velocities at event times happens. Suppose that an event happens at times $t_1$ and $t_2$ for two I/O-CS processes with positions $X^1_{t_1}$ and $X^2_{t_2}$, respectively. The marginal distributions of the refreshed velocities are given by categorical distributions on $\mathcal{V}_{\textrm{I/O-CS}}$ with weights $W_v^1 \propto \lambda(X^1_{t_1}, -v)$ and $W_v^2 \propto \lambda(X^2_{t_2}, -v)$, respectively. Fortunately, maximal couplings on categorical distributions are well understood~\citep[see, e.g.][and references within]{Thorisson2000Coupling,Lindvall2002lectures}, and can be sampled from easily (see Section~\ref{subsec:coupling-intro}). 

            A natural way to couple the velocity refreshment is then to try and maximise the probability of picking exactly the same velocity component. This would indeed result in a valid method with heuristically good coupling probabilities. 
            However, this would be missing that coupling is often possible when $v^1 = -v^2$ and it is, therefore, more beneficial to try to maximise the probability that the two vectors $v^1$ and $v^2$ are proportional. As a consequence, writing $W^i_{\abs{v}} = W^i_{v} + W^i_{-v}$ for $v \neq 0$ and $W^i_{\abs{0}} = W^i_{0}$, we wish to rather sample from a maximum coupling of $W^1_{\abs{v}}$ and $W^2_{\abs{v}}$. This corresponds to a coupling on random indices $j_1, j_2$, taking values in $\{0, 1, \ldots, d\}$, and with categorical distributions having weights $ W^i_{\abs{v}}$, $j=0, 1, \ldots, N$, $i=1, 2$, respectively.

            Suppose now that the coupling has been successful and that $j_1 = j_2 > 0$, so that we will have $v^1$ and $v^2$ co-linear and non-null. Conditionally on $j_1 = j_2 = j$, the distributions of $v^1$ and $v^2$ are given as categorical distributions on the set $\{\pm e_j\}$ with weights 
            \[
(W^1_+, W^1_-) = \left(\frac{W^1_{e_j}}{W^1_{j}}, \frac{W^1_{-e_j}}{W^1_{j}}\right),
\quad
(W^2_+, W^2_-) = \left(\frac{W^2_{e_j}}{W^2_{j}}, \frac{W^2_{-e_j}}{W^2_{j}}\right).      
            \]

            Now, suppose $X^1_{t_1}(j) > X^2_{t_2}(j)$, for the next event to be able to result in a position coupling, we need either $v^1 = v^2$, or $v^1 = -e_j$ and $v^2 = e_j$. As a consequence, we aim at minimising the probability of the event $\{v^1 = e_j \}\cap\{v^2 = -e_j\}$. 
            This corresponds to using a common random coupling on the two Bernoulli distributions $p(v^1 = v^2 = e_j) = W_{++}$, $p(v^1 = v^2 = -e_j) = W_{--}$, $p(v^1 = -v^2 = e_j) = W_{+-}$, and $p(v^1 = -v^2 = -e_j) = W_{-+}$.
            The same is true for $X^1_{t_1}(j) < X^2_{t_2}(j)$. The overall coupled I/O-CS algorithm is given in Algorithm~\ref{alg:coupled-IO-CS}.
            \begin{algorithm}[!htb]
                \caption{$\Delta$-coupling of the I/O-CS}
                \label{alg:coupled-IO-CS}
                \DontPrintSemicolon
                Sample $x^1(0), v^1(0)$ and $x^2(0), v^2(0)$ independently of $\pi_0(x, v)$.
                Set $s^1 = s^2 = 0$\;
                \While{$\mathrm{True}$}{
                    Sample $\tau^1, \tau^2$ from a coupling of $\PPP^{\lambda^1_t}_{s^1}$ and $\PPP^{\lambda^2_t}_{s^2}$ following Section~\ref{subsec:delta-coupling-CS}\;
                    Set $x^j(t) = x^j(s^j) + (t-s)v^j(s^j)$, $t \in [s^j, s^j + \tau^j]$ and $v^j(t) = v^j(s^j)$, $t \in [s^j, s^j + \tau^j)$, $j=1, 2$\;
                    Set $s^j = s^j + \tau^j$, $j=1, 2$\;
                    Sample $v(s^1), v(s^2)$ from a common random number coupling of $\mathrm{Cat}(W^1)$ and $\mathrm{Cat}(W^2)$ following Section~\ref{subsec:coupled-cs-final}\;
                }
            \end{algorithm}

    \pagebreak

    \section{Experimental validation}\label{sec:experiments}
        In this section, we provide validation and comparison of our coupled samplers' behaviour. We investigate the dependence on tuning parameters, scaling with dimension and efficiency of the coupled estimators.

        \subsection{Empirical scaling of coupling times: Gaussian targets}        
        We first investigate the performance of our different coupled samplers for a simple $d$-dimensional normal distribution $\mathcal{N}(0_d,I_d)$ as we scale up the dimension. For this example we keep the refreshment rate $\lref = 1$ constant for all samplers; a more detailed examination of the tuning of $\Delta$ may be found in Appendix \ref{app:tuneDelta}. We investigate the average $\Delta$-coupling time $\kappa$ defined in \ref{def:async-coupling} as the time for which $Z_{\kappa+\Delta+t}^1=Z_{\kappa+t}^2$ for all $t>0$. We also measure the average computation required for the coupling to occur. The \textit{computational cost} is defined as the sum of the number of gradient evaluations used by the coupled processes. For the coordinate sampler, a single partial derivative rather than full gradient vector is needed so we scale the computational cost by the inverse of the dimension. The coupled Boomerang sampler for this example is implemented with the reference distribution equal to the target. This represents an ideal scenario where the Boomerang follows the Hamiltonian flow and the only events that occur are refreshment events. Consequently no gradients are required to implement the Boomerang and the computation reported is instead the sum of the number of refreshment events used by the coupled processes.
        
        \begin{figure}[!htp]
            \centering
            \includegraphics[width=.8\textwidth]{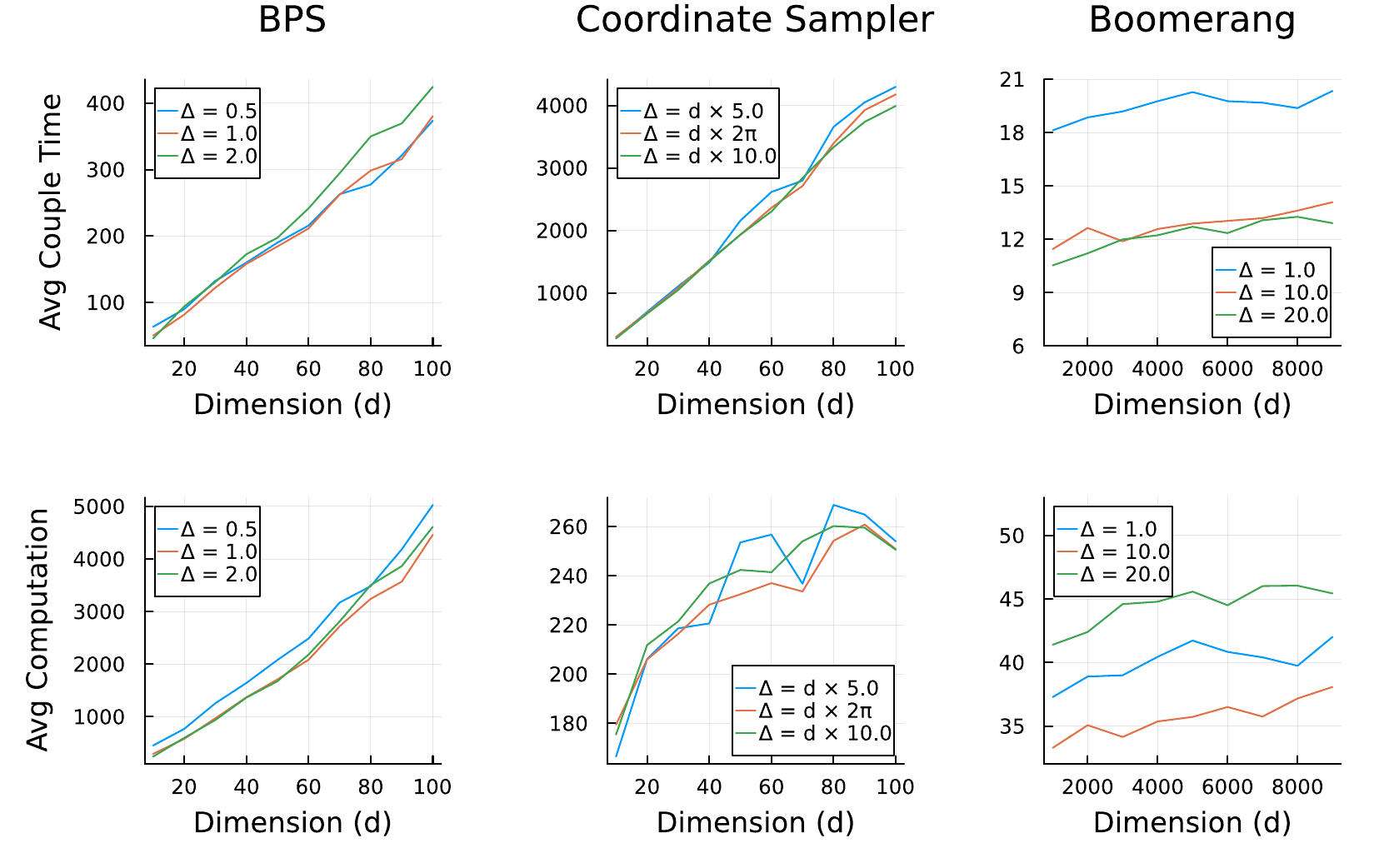}
            \caption{Gaussian distribution scaling example. Average $\Delta$-coupling times of 500 coupled PDMPs for BPS (left), Coordinate sampler (middle) and Boomerang sampler (right). The top row corresponds to the average meeting time $\kappa$ per definition \ref{def:async-coupling} and the bottom row corresponds to the average computational cost.
            \label{fig:GaussScale}}
        \end{figure}
                
        Figure~\ref{fig:GaussScale} suggests that the average $\Delta$-coupling times for BPS and the Coordinate Sampler scale linearly with the dimension. The CS appears to have a better scaling than BPS in terms of computational cost but this comes at the expense of a less efficient sampler. Since CS moves a single coordinate at a time, the efficiency of the sampler is expected to be at least $O(d)$ worse than the BPS or Boomerang. The Boomerang has $\Delta$-coupling times that remain roughly constant with the size of the dimension. This complements similar empirical scaling results from \cite{Heng2019couplings} for coupled HMC.  

        \subsection{Logistic Regression}     
        We consider a Bayesian logistic regression problem as in simulations from \cite{bierkens2020boomerang} to illustrate our method. We simulate a dataset with predictors $a_i \in \mathcal{R}^{16}$ where the first element of $a_i$ is one and the remainder are simulated independently from a normal distribution, and response $y_i \in \{0,1\}$ for $i=1,...,100$. We take a standard normal distribution as the prior for the coefficients. The potential function for our model is,
        $$
        U(x) = \sum_{i=1}^n\left\{\log(1 + e^{a_i^\top x}) - y_ie^{a_i^\top x} \right\} + \frac{1}{2} x^\top x.
        $$
        We consider the task of estimating expectations of the functions $h_i(x) = x_i$ for $i=1,...,16$ and $h_{17}(x) = U(x)$ when starting the sampler from $\pi_0=\mathcal{N}(0_{16},I_{16})$. Small refreshment rates lead to longer deterministic trajectories which give good mixing for marginal means but poor mixing for functions such as $U(x)$ \citep{Deligiannidis2018BPSscale}. The refreshment rate $\lref$ and time-delay $\Delta$ are selected based on running the respective Boomerang and BPS kernels for 1000 units of stochastic time as burnin, followed by 1000 units of stochastic time for estimation. This was repeated 100 times to estimate the average computation (number of gradient evaluations) and the variance of the estimators. We define the inefficiency as the average number of gradient evaluations multiplied by the average of the sum of the variances $\sum_{i=1}^{17}\mathbb{V}[h_i]$. We plot the results in   Figure~\ref{fig:tuning_logistic}.        
        From now on, we set $\Delta=8$  for both samplers, and $\lref=4$ for BPS, $\lambda=3$ for Boomerang.
        \begin{figure}[!htp]
            \centering
            \includegraphics[width=.55\textwidth]{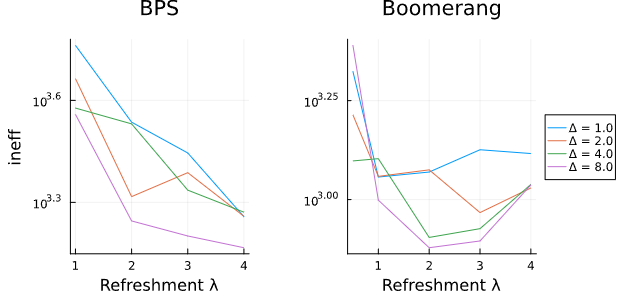}
            \caption{Logistic regression example: tuning parameters. Inefficiency measured as the average number of gradient evaluations multiplied by sum of the variance of estimators for BPS (left) and Boomerang (right).}
            \label{fig:tuning_logistic}
        \end{figure}

        Next we consider the specification of $k$ in the ADDRG estimator. Over the 100 independent runs, the average $\Delta$-meeting times for the BPS and Boomerang were $108.28$ and $31.96$ (respectively) with 95\% quantiles of $291.76$ and $74.14$ (respectively). We apply the guideline of setting $k$ to cover the 95th quantile, so $k=\lceil\frac{291.76}{\Delta}\rceil = 37$ for BPS and $k = \lceil\frac{74.14}{\Delta}\rceil = 10$ for the Boomerang. In parallel, 1000 independent coupled processes were run for both the coupled BPS and Boomerang with these optimal parameters. Table \ref{tab:Relineff} shows the average number of gradient evaluations and sum of the variances of the estimators $ADDRG(k,m,\Delta)$ for multiple values of $m$. 

        \begin{table}[!htp]
        \centering

        \begin{minipage}{0.45\textwidth}
        \begin{tabular}{ccccc}
        \multicolumn{4}{c}{\textbf{Boomerang Sampler}}\\[0.3em]
        $m$   &   Cost   & Variance & Rel. Ineff. \\[0.5em]
        $k$   & $1,361$  & $445.341$& $619.608$ \\
        $50k$ & $41,826$ &  $0.796$ & $37.664$ \\
        $100k$& $83,148$ &  $0.015$ & $1.35$                
        \end{tabular}
        \end{minipage}
                \begin{minipage}{0.45\textwidth}
        \begin{tabular}{ccccc}       
        \multicolumn{4}{c}{\textbf{BPS Sampler}}\\[0.3em]
        $m$   &   Cost   & Variance & Rel. Ineff. \\[0.5em]
        $k$   & $6,874$  & $16.502$& $57.72$ \\
        $10k$ & $51,511$ &  $0.046$ & $1.32$ \\
        $30k$ & $150,514$&  $0.015$ & $1.25$                
        \end{tabular}
        \end{minipage}
                \caption{Logistic regression example: relative inefficiency of the coupled estimators. 
                Cost (average number of gradient evaluations), and variance based on 
                1000 independent runs. Relative inefficiency is relative to the uncoupled version of the sampler.
            \label{tab:Relineff}}
        \end{table}

        Finally, we want to measure the efficiency of our estimators relative to a single process run with the same computation budget and optimal choice of $\lambda$ and $\Delta$. For each value of $m$, the combined amount of process time that the coupled samplers were run for was $\Delta\left[1 + 2(\lceil \frac{\kappa}{\Delta}\rceil -1) + \max\left(1,m-\lceil \frac{\kappa}{\Delta}\rceil\right)\right]$. The average of this time for the 1000 estimators was stored. We ran 1000 further estimators, each using one process and the standard PDMP kernel rather than the coupled kernel. These samplers were afforded 1000 units of burnin time, followed by the average of the combined stochastic time that was used by the coupled samplers. The inefficiency for these estimators was taken as the variance of $\sum_{i=1}^{17}\mathbb{V}[h_i]$ multiplied by the average number of gradient evaluations for the single process. The relative inefficiency was then defined as the ratio of the inefficiency of the estimators from the coupled process to the inefficiency of the estimators using a single process. 

    We add these results to Table \ref{tab:Relineff} (rightmost columns of both sub-tables). For a sufficiently large value of $m$ we find that the estimators attain relative inefficiency close to 1 in this example. For $m=100k$ in Boomerang or $m\geq 10k$ we see that the inefficiency is less than two meaning we can get improved performance relative to the serial versions of the samplers if we have at least two cores. We note the boomerang requires a larger scaling of $m= 100k$ to attain an inefficiency close to 1. This is likely because of the short coupling times where $k=10$ for the Boomerang as opposed to $k=37$ for the BPS.

    \section{Discussion}
        We have presented a general debiasing strategy inspired from~\citet{Jacob2020unbiasedMCMC} for expectations computed under ergodic continuous processes, in particular piecewise deterministic Markov processes. Explicit constructions have been given for the bouncy particle, the boomerang, and the coordinate samplers. We have further shown that, under the hypothesis of a perturbed homogeneous target potential, the bouncy particle sampler returned consistent and efficient estimates. The coordinate sampler required a slight modification, whereby we allowed it to fully switch itself off, to let the two coupled processes recouple in time. This modification did not affect its stationarity nor its ergodicity, but the more precise results obtained in~\citet{wu2019coordinate} do not follow through as easily, and more work would be warranted to obtain drift conditions in our modified version.         
        In the remainder of this section, we discuss some particular points of interest.

        \paragraph{Improving on the $\Delta$-coupling for the coordinate sampler.}
            In Section~\ref{subsec:delta-coupling-CS}, we have made the choice to time-synchronise the two processes when both velocities were null, or to synchronise their positions when the velocities where co-linear, and nothing else otherwise. We should in fact consider additional possible couplings and plan to do so in a future version of this work. Indeed, when only one of the velocities is null, we can still time-synchronise, although the procedure is more complicated given it involves a non-homogeneous Poisson process. Similarly, when both are not null but also not co-linear, we can either still sample from a time-synchronisation coupling, or introduce a position contractive coupling, for example, a coupling on the stochastic times corresponding to an optimal transport coupling for the positions.
        
        \paragraph{The Zig-Zag sampler.}
            At first sight, it may seem that the construction of Section~\ref{sec:coupling-the-coordinate-sampler} should apply to the Zig-Zag sampler \textit{mutatis mutandis}, but it in fact does not. This is because the velocities of the Zig-Zag sampler cannot be killed but only switched. 
            As a consequence, coupling a given coordinate by synchronising the clocks would result in a decoupling phenomenon for all other coordinates, resulting in a never-ending coupling race between these. On the other hand, adding a form of stochasticity in the velocity jumping process would allow for more flexibility in the coupling procedure and perhaps result in a competing algorithm.

        \paragraph{Verifying the assumptions for the boomerang sampler.}
            We have not in this article undertaken the task of verifying that the boomerang sampler verifies the assumptions necessary for the application of our unbiased estimators. However, there is reason to believe that similar properties as for BPS hold for the boomerang sampler. This is because the underlying PDMP structure is highly similar, with the only difference being the deterministic part. While the study of the geometric ergodicity of the boomerang sampler is out of scope for this paper, we conjecture that the analysis of \citet{durmus2020geometric} holds in its entirety. Similarly, other similar PDMPs are likely to emerge in the near future, relying on different types of geometries and/or reference measures for efficiency, which would be likewise covered by our and \citet{durmus2020geometric} work.

        \paragraph{The choice of $\Delta$.}
            In our formulation of the discretised and continuous Rhee \& Glynn estimators, we have assumed that $\Delta > 0$ was deterministic and chosen in advance. This is in fact not a necessary condition, and one could use a random $\Delta$, provided that it can only take positive values.
            In particular, we could select $\Delta$ \emph{after the fact}, by implementing time-coupling events of the form $s^1 + \tau^1 > s^2 + \tau^2$ on the event times $\tau^1$ and $\tau^2$, rather than $s^1 + \tau^1 + \Delta = s^2 + \tau^2$. Doing so would define $\Delta = s^1 + \tau^1 - s^2 - \tau^2$ implicitly (provided that state-coupling is then achieved). 

        \paragraph{Adaptive thinning and local bounds for the Poisson rate.}
             We have presented the construction of our coupling for global bounds of the rate only. In practice, only local bounds may be accessible, and one may need to resort to adaptive thinning schemes. We note that modifying the coupled thinning scheme~\ref{alg:coupled-thinning} to this would be immediate, and requires very little work. Furthermore, when the Poisson processes can be simulated in closed form, it is possible to construct couplings manually rather than via rejection mechanisms (for example using~\ref{alg:coupling-modified-crn}).

    \section*{Acknowlegments}
        Adrien Corenflos acknowledges the financial support provided by UKRI for grant EP/Y014650/1, as part of the ERC Synergy project OCEAN.

    \section*{Individual Contributions}
        The original idea for this article comes from discussions between AC and NC. The methodology was developed primarily by AC in collaboration with NC and MS. The proofs are due to AC and NC. The experimental results are in majority due to MS. Writing was primarily done by AC, after which all authors reviewed the manuscript.

    \bibliographystyle{apalike}
    \bibliography{bib}

\begin{thebibliography}{}

\bibitem[Andrieu et~al., 2024]{andrieu2022poincar}
Andrieu, C., Lee, A., Power, S., and Wang, A.~Q. (2024).
\newblock Explicit convergence bounds for {M}etropolis {M}arkov chains:
  {I}soperimetry, spectral gaps and profiles.
\newblock {\em Ann. Appl. Probab.}, 34(4):4022--4071.

\bibitem[Betancourt, 2017]{betancourt2017conceptual}
Betancourt, M. (2017).
\newblock A conceptual introduction to {Hamiltonian Monte Carlo}.
\newblock {\em arXiv preprint arXiv:1701.02434}.

\bibitem[Bierkens et~al., 2019a]{bierkens2019zig}
Bierkens, J., Fearnhead, P., and Roberts, G. (2019a).
\newblock {The Zig-Zag process and super-efficient sampling for Bayesian
  analysis of big data}.
\newblock {\em The Annals of Statistics}, 47(3):1288--1320.

\bibitem[Bierkens et~al., 2020]{bierkens2020boomerang}
Bierkens, J., Grazzi, S., Kamatani, K., and Roberts, G. (2020).
\newblock {The Boomerang Sampler}.
\newblock In III, H.~D. and Singh, A., editors, {\em Proceedings of the 37th
  International Conference on Machine Learning}, volume 119 of {\em Proceedings
  of Machine Learning Research}, pages 908--918. ICML, PMLR.

\bibitem[Bierkens et~al., 2019b]{bierkens2029ergodicity}
Bierkens, J., Roberts, G.~O., and Zitt, P.-A. (2019b).
\newblock {Ergodicity of the ZigZag process}.
\newblock {\em The Annals of Applied Probability}, 29(4):2266 -- 2301.

\bibitem[Biswas et~al., 2020]{biswas2020coupled}
Biswas, N., Bhattacharya, A., Jacob, P.~E., and Johndrow, J.~E. (2020).
\newblock Coupled {Markov chain Monte Carlo} for high-dimensional regression
  with half-t priors.
\newblock {\em arXiv preprint ArXiv:2012.04798}.

\bibitem[Biswas et~al., 2022]{biswas2022coupling}
Biswas, N., Bhattacharya, A., Jacob, P.~E., and Johndrow, J.~E. (2022).
\newblock {Coupling-based Convergence Assessment of some Gibbs Samplers for
  High-Dimensional Bayesian Regression with Shrinkage Priors}.
\newblock {\em Journal of the Royal Statistical Society Series B: Statistical
  Methodology}, 84(3):973--996.

\bibitem[Bou-Rabee et~al., 2020]{Bou2020coupling}
Bou-Rabee, N., Eberle, A., and Zimmer, R. (2020).
\newblock Coupling and convergence for {H}amiltonian {M}onte {C}arlo.
\newblock {\em The Annals of Applied Probability}, 30(3):1209--1250.

\bibitem[Bouchard-C\^ot\'e et~al., 2018]{Bouchard2018BPS}
Bouchard-C\^ot\'e, A., Vollmer, S.~J., and Doucet, A. (2018).
\newblock The bouncy particle sampler: A nonreversible rejection-free {Markov}
  chain {Monte Carlo} method.
\newblock {\em Journal of the American Statistical Association},
  113(522):855--867.

\bibitem[Chen et~al., 1999]{chen1999lifting}
Chen, F., Lov{\'a}sz, L., and Pak, I. (1999).
\newblock Lifting {M}arkov chains to speed up mixing.
\newblock In {\em Proceedings of the thirty-first annual ACM symposium on
  Theory of computing}, pages 275--281. ACM, Association for Computing
  Machinery.

\bibitem[Corbella et~al., 2022]{corbella2022automatic}
Corbella, A., Spencer, S.~E., and Roberts, G.~O. (2022).
\newblock {Automatic Zig-Zag sampling in practice}.
\newblock {\em Statistics and Computing}, 32(6):107.

\bibitem[Corenflos and S\"arkk\"a, 2022]{corenflos2022rejection}
Corenflos, A. and S\"arkk\"a, S. (2022).
\newblock The coupled rejection sampler.
\newblock {\em arXiv preprint ArXiv:2201.09585}.

\bibitem[Dau and Chopin, 2023]{Dau2023complexity}
Dau, H.-D. and Chopin, N. (2023).
\newblock On backward smoothing algorithms.
\newblock {\em Ann. Statist.}, 51(5):2145--2169.

\bibitem[Davis, 1984]{davis1984piecewise}
Davis, M.~H. (1984).
\newblock {Piecewise-deterministic Markov processes: A general class of
  non-diffusion stochastic models}.
\newblock {\em Journal of the Royal Statistical Society: Series B
  (Methodological)}, 46(3):353--376.

\bibitem[Deligiannidis et~al., 2019]{deligiannidis2019exponential}
Deligiannidis, G., Bouchard-Cote, A., and Doucet, A. (2019).
\newblock Exponential ergodicity of the bouncy particle sampler.
\newblock {\em Annals of Statistics}, 47(3).

\bibitem[Deligiannidis et~al., 2021]{Deligiannidis2018BPSscale}
Deligiannidis, G., Paulin, D., Bouchard-C{\^o}t{\'e}, A., and Doucet, A.
  (2021).
\newblock {Randomized Hamiltonian Monte Carlo as scaling limit of the bouncy
  particle sampler and dimension-free convergence rates}.
\newblock {\em The Annals of Applied Probability}, 31(6):2612 -- 2662.

\bibitem[Devroye et~al., 2022]{devroye2022total}
Devroye, L., Mehrabian, A., and Reddad, T. (2022).
\newblock The total variation distance between high-dimensional {Gaussians}
  with the same mean.
\newblock {\em arXiv preprint ArXiv:1810.08693}.

\bibitem[Diaconis et~al., 2000]{diaconis2000nonrev}
Diaconis, P., Holmes, S., and Neal, R.~M. (2000).
\newblock {Analysis of a nonreversible Markov chain sampler}.
\newblock {\em The Annals of Applied Probability}, 10(3):726 -- 752.

\bibitem[Dobson and Bierkens, 2023]{dobson2022infinite}
Dobson, P. and Bierkens, J. (2023).
\newblock Infinite dimensional piecewise deterministic {M}arkov processes.
\newblock {\em Stochastic Process. Appl.}, 165:337--396.

\bibitem[Douc et~al., 2018]{Douc2018MC}
Douc, R., Moulines, E., Priouret, P., and Soulier, P. (2018).
\newblock {\em Markov Chains}.
\newblock Springer International Publishing.

\bibitem[Duane et~al., 1987]{duane1987HMC}
Duane, S., Kennedy, A.~D., Pendleton, B.~J., and Roweth, D. (1987).
\newblock Hybrid {Monte Carlo}.
\newblock {\em Physics Letters B}, 195(2):216--222.

\bibitem[Durmus et~al., 2020]{durmus2020geometric}
Durmus, A., Guillin, A., and Monmarch{\'e}, P. (2020).
\newblock Geometric ergodicity of the bouncy particle sampler.
\newblock {\em The Annals of Applied Probability}, 30(5):2069--2098.

\bibitem[Durmus et~al., 2021]{durmus2021piecewise}
Durmus, A., Guillin, A., and Monmarch{\'e}, P. (2021).
\newblock {Piecewise deterministic Markov processes and their invariant
  measures}.
\newblock {\em Annales de l'Institut Henri Poincaré, Probabilités et
  Statistiques}, 57(3):1442 -- 1475.

\bibitem[Fearnhead et~al., 2018]{fearnhead2018piecewise}
Fearnhead, P., Bierkens, J., Pollock, M., and Roberts, G.~O. (2018).
\newblock Piecewise deterministic {M}arkov processes for continuous-time {Monte
  Carlo}.
\newblock {\em Statistical Science}, 33(3):386--412.

\bibitem[Gal et~al., 1984]{gal1984optimality}
Gal, S., Rubinstein, R., and Ziv, A. (1984).
\newblock On the optimality and efficiency of common random numbers.
\newblock {\em Mathematics and computers in simulation}, 26(6):502--512.

\bibitem[Geman and Geman, 1984]{geman1984stochastic}
Geman, S. and Geman, D. (1984).
\newblock Stochastic relaxation, {Gibbs} distributions, and the {Bayesian}
  restoration of images.
\newblock {\em IEEE Transactions on pattern analysis and machine intelligence},
  PAMI-6(6):721--741.

\bibitem[Giles, 2015]{giles2015mlmc}
Giles, M.~B. (2015).
\newblock Multilevel {Monte Carlo} methods.
\newblock {\em Acta Numerica}, 24:259–328.

\bibitem[Glynn and Rhee, 2014]{glynn_rhee_2014}
Glynn, P.~W. and Rhee, C.-H. (2014).
\newblock Exact estimation for {Markov} chain equilibrium expectations.
\newblock {\em Journal of Applied Probability}, 51(A):377–389.

\bibitem[Goodman and Lin, 2009]{goodman2009coupling}
Goodman, J.~B. and Lin, K.~K. (2009).
\newblock Coupling control variates for {Markov chain Monte Carlo}.
\newblock {\em Journal of Computational Physics}, 228(19):7127--7136.

\bibitem[Hammersley and Morton, 1956]{hammersley1956new}
Hammersley, J. and Morton, K. (1956).
\newblock A new {M}onte {C}arlo technique: antithetic variates.
\newblock {\em Mathematical proceedings of the Cambridge philosophical
  society}, 52(3):449--475.

\bibitem[Hastings, 1970]{Hastings1970mcmc}
Hastings, W.~K. (1970).
\newblock Monte {Carlo} sampling methods using {Markov} chains and their
  applications.
\newblock {\em Biometrika}, 57(1):97--109.

\bibitem[Heng and Jacob, 2019]{Heng2019couplings}
Heng, J. and Jacob, P.~E. (2019).
\newblock {Unbiased Hamiltonian Monte Carlo with couplings}.
\newblock {\em Biometrika}, 106(2):287--302.

\bibitem[Huber, 2016]{huber2016perfect}
Huber, M.~L. (2016).
\newblock {\em Perfect simulation}, volume 148.
\newblock CRC Press.

\bibitem[Jacob et~al., 2020a]{Jacob2020Smoothing}
Jacob, P.~E., Lindsten, F., and Schön, T.~B. (2020a).
\newblock Smoothing with couplings of conditional particle filters.
\newblock {\em Journal of the American Statistical Association},
  115(530):721--729.

\bibitem[Jacob et~al., 2020b]{Jacob2020unbiasedMCMC}
Jacob, P.~E., O’Leary, J., and Atchad\'e, Y.~F. (2020b).
\newblock Unbiased {Markov chain Monte Carlo} methods with couplings.
\newblock {\em Journal of the Royal Statistical Society: Series B (Statistical
  Methodology)}, 82(3):543--600.

\bibitem[Lindvall, 2002]{Lindvall2002lectures}
Lindvall, T. (2002).
\newblock {\em Lectures on the coupling method}.
\newblock Dover Publications, Inc., Mineola, NY.
\newblock Corrected reprint of the 1992 original.

\bibitem[Metropolis et~al., 1953]{metropolis1953equation}
Metropolis, N., Rosenbluth, A.~W., Rosenbluth, M.~N., Teller, A.~H., and
  Teller, E. (1953).
\newblock Equation of state calculations by fast computing machines.
\newblock {\em The journal of chemical physics}, 21(6):1087--1092.

\bibitem[Michel et~al., 2014]{michel2014generalized}
Michel, M., Kapfer, S.~C., and Krauth, W. (2014).
\newblock Generalized event-chain monte carlo: Constructing rejection-free
  global-balance algorithms from infinitesimal steps.
\newblock {\em The Journal of chemical physics}, 140(5).

\bibitem[Middleton et~al., 2020]{middleton2020unbiased}
Middleton, L., Deligiannidis, G., Doucet, A., and Jacob, P.~E. (2020).
\newblock {Unbiased Markov chain Monte Carlo for intractable target
  distributions}.
\newblock {\em Electronic Journal of Statistics}, 14:2842--2891.

\bibitem[Nguyen et~al., 2022]{nguyen2022many}
Nguyen, T.~D., Trippe, B.~L., and Broderick, T. (2022).
\newblock Many processors, little time: {MCMC} for partitions via optimal
  transport couplings.
\newblock In {\em International Conference on Artificial Intelligence and
  Statistics}, pages 3483--3514. PMLR.

\bibitem[Pagani et~al., 2024]{pagani2022nuzz}
Pagani, F., Chevallier, A., Power, S., House, T., and Cotter, S. (2024).
\newblock Nu{ZZ}: {N}umerical {Z}ig-{Z}ag for general models.
\newblock {\em Stat. Comput.}, 34(1):Paper No. 61, 26.

\bibitem[Peters and de~With, 2012]{peters2012rejection}
Peters, E.~F. and de~With, G. (2012).
\newblock Rejection-free {Monte Carlo} sampling for general potentials.
\newblock {\em Physical review. E, Statistical, nonlinear, and soft matter
  physics}, 85(2 Pt 2):026703.

\bibitem[Propp and Wilson, 1996]{propp1996exact}
Propp, J.~G. and Wilson, D.~B. (1996).
\newblock Exact sampling with coupled {M}arkov chains and applications to
  statistical mechanics.
\newblock {\em Random Structures \& Algorithms}, 9(1-2):223--252.

\bibitem[Rhee and Glynn, 2015]{rhee2015unbiased}
Rhee, C.-H. and Glynn, P.~W. (2015).
\newblock Unbiased estimation with square root convergence for {SDE} models.
\newblock {\em Operations Research}, 63(5):1026--1043.

\bibitem[Robert and Casella, 2004]{robert2004monte}
Robert, C.~P. and Casella, G. (2004).
\newblock {\em {Monte Carlo} statistical methods}, volume~2.
\newblock Springer.

\bibitem[Roberts and Rosenthal, 2001]{Roberts2001optimal}
Roberts, G.~O. and Rosenthal, J.~S. (2001).
\newblock {Optimal scaling for various Metropolis-Hastings algorithms}.
\newblock {\em Statistical Science}, 16(4):351 -- 367.

\bibitem[Sutton and Fearnhead, 2023]{Sutton2023ConvexConcave}
Sutton, M. and Fearnhead, P. (2023).
\newblock Concave-convex {PDMP}-based sampling.
\newblock {\em Journal of Computational and Graphical Statistics}, 0(0):1--11.

\bibitem[Thorisson, 2000]{Thorisson2000Coupling}
Thorisson, H. (2000).
\newblock {\em Coupling, Stationarity, and Regeneration}.
\newblock Springer-Verlag.

\bibitem[Villani, 2009]{villani2009optimal}
Villani, C. (2009).
\newblock {\em Optimal transport: old and new}, volume 338.
\newblock Springer.

\bibitem[Wang et~al., 2021]{Wang2021coupling}
Wang, G., O'Leary, J., and Jacob, P. (2021).
\newblock Maximal couplings of the metropolis-hastings algorithm.
\newblock In Banerjee, A. and Fukumizu, K., editors, {\em Proceedings of The
  24th International Conference on Artificial Intelligence and Statistics},
  volume 130 of {\em Proceedings of Machine Learning Research}, pages
  1225--1233. AISTATS, PMLR.

\bibitem[Wu and Robert, 2019]{wu2019coordinate}
Wu, C. and Robert, C.~P. (2019).
\newblock {The Coordinate Sampler: A Non-Reversible Gibbs-like MCMC Sampler}.
\newblock {\em Statistics and Computing}, 30:721--730.

\end{thebibliography}

    \appendix

    \section{Tuning and coupled MCMC}\label{app:tuning}
        In this section, we extend our discussion of the problems encountered when combining coupling with MCMC algorithms requiring tuning. We present a few possible alternatives and highlight their drawbacks and shortcomings.

        Throughout this section, we consider given a tuning rule for a kernel $K_k(\dd{x}_{k+1} \mid x_k)$ whereby we mean that $K_{k+1}$ is a function of $K_k$, the full history $x_{0:k}$, and the resulting sample $x_{k+1}$. For instance, suppose that $K_k$ is an MRTH kernel with a random-walk proposal $q_k(x_{k+1} \mid x_{k}) = \mathcal{N}(x_{k+1}; x_{k}, \lambda_k I)$, for which we want to tune $\lambda_k$ to achieve an average acceptance rate of 0.234 as in \citet{Roberts2001optimal}, we can compute the running empirical average acceptance rate and decrease $\lambda_k$ if it is found to be too low, or increase it if it is found to be too high.

        Such tuning is done for a number of steps $A > 0$, which is pre-set and chosen by the user, after which the kernel $K$ is fixed to be $K_A$.
        
        \subsection{Independent tuning of coupled chains}
            The first natural idea for coupling tuned kernels is to run the adaptation independently using the same rule, preserving the property of identical marginal distributions for the two chain states $x_k$ and $y_k$. However, doing so results in kernels $K^x$ and $K^y$ which likely do not verify $\norm{K^x(\cdot \mid z) -K^y(\cdot \mid z)}_{\mathrm{TV}} = 0$ for all $z$, so that, when the two chains meet, i.e., for $k$ with $x_k = y_k$, the probability of them separating is non-null. To say otherwise, this procedure does not guarantee that the coupling is faithful, voiding the method of \citet{Jacob2020unbiasedMCMC}.
        \subsection{Tuning then coupling}
            Another solution is to tune the algorithm using a single chain: we obtain a tuned kernel $K$, and can then apply the method of \citet{Jacob2020unbiasedMCMC}. This is well-defined and theoretically grounded. However, it presents a number of disadvantages and questions: 
            \begin{enumerate}
                \item Upfront cost: the tuning needs to be done ahead of time and can't be parallelised, which reduces the interest of the method,
                \item Additional burn-in period: when the tuning procedure is over, we can typically start accumulating statistics from our chain, with no further burn-in. It is however not clear that this is the right thing to do when using \citet{Jacob2020unbiasedMCMC}, given that the coupling may not happen immediately but later on, which contradicts the desire of choosing a burn-in period commensurate to the median coupling time.
                \item Choice of the initial distribution: once the kernel has been fixed, we still need to pick where the two coupled chains need to start. We may start from our initial distribution for $x_0$ again, thereby undoing progress, start both from $x_A$ immediately, reducing potential exploration, or run a few ``decoupling steps'' and obtain $x_{A + C} \sim K^{C}(\cdot \mid x_A)$, $y_{A + C} \sim K^{C}(\cdot \mid x_A)$, approximately independent (provided that $C$ is large enough).
            \end{enumerate}

    \section{On-and-off Coordinate sampler}\label{app:i-o-cs}
        In this section, we discuss the theoretical properties of our I/O-CS modification of the algorithm of~\citet{wu2019coordinate}. We follow their proofs.

        \begin{lemma}
            The I/O-CS is $\pi$-invariant.
        \end{lemma}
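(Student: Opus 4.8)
The plan is to establish invariance through the extended generator of the PDMP, verifying that $\int \mathcal{L} f \, \dd{\pi} = 0$ for every test function $f$ in the domain of the generator. For the I/O-CS, with state $z = (x,v)$ and $v \in \mathcal{V}_{\textrm{I/O-CS}}$, the generator reads
\begin{equation*}
    \mathcal{L}f(x,v) = \langle v, \nabla_x f(x,v)\rangle + \lambda(x,v) \sum_{v' \in \mathcal{V}_{\textrm{I/O-CS}}} w_{v'}(x)\left[f(x,v') - f(x,v)\right],
\end{equation*}
where $\lambda(x,v) = \langle v, \nabla U(x)\rangle_+ + \lref$ and $w_{v'}(x) = \lambda(x,-v')/\Lambda(x)$ with $\Lambda(x) = \sum_{u} \lambda(x,-u)$. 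Since $\varphi$ is uniform on the $(2d+1)$-point set $\mathcal{V}_{\textrm{I/O-CS}}$, integrating against $\pi(x,v)=\pi(x)\varphi(v)$ amounts to computing $\frac{1}{2d+1}\sum_v \int_{\mathbb{R}^d} \mathcal{L}f(x,v)\, \pi(x)\dd{x}$, which I would split into a transport contribution and a jump contribution.

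First I would handle the transport term. Integrating by parts in $x$ and using $\nabla_x \pi(x) = -\nabla U(x)\pi(x)$ (the boundary terms vanishing under the growth assumptions on $U$), it becomes $\sum_v \varphi(v)\int \langle v, \nabla U(x)\rangle f(x,v)\pi(x)\dd{x}$. Crucially, the null velocity contributes nothing here, since $\langle 0, \nabla U\rangle = 0$.

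Next I would simplify the jump term. Factoring out the constant $\varphi(v)$ and the common $1/\Lambda(x)$, the double sum over $(v,v')$ decouples: using the symmetry $v\in\mathcal{V}_{\textrm{I/O-CS}} \iff -v \in \mathcal{V}_{\textrm{I/O-CS}}$, which the addition of $\{0\}$ preserves since $-0=0$, one has $\sum_v \lambda(x,v) = \sum_v \lambda(x,-v) = \Lambda(x)$. The factor $\Lambda(x)$ then cancels, and the jump term collapses to $\frac{1}{2d+1}\int \pi(x) \sum_v \left[\lambda(x,-v) - \lambda(x,v)\right] f(x,v)\dd{x}$.

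The crux is the pointwise identity $\langle v, \nabla U\rangle_+ - \langle -v, \nabla U\rangle_+ = \langle v, \nabla U\rangle$, valid for every $v$ and trivially so for $v=0$; it gives $\lambda(x,-v) - \lambda(x,v) = -\langle v, \nabla U(x)\rangle$, so the jump term is exactly the negative of the transport term and $\int \mathcal{L}f\,\dd{\pi} = 0$. The main thing to verify is therefore structural rather than analytic: that the modification keeps the velocity set symmetric, and that the null velocity — which enters only through $\lambda(x,0) = \lref$ and a frozen flow — breaks neither the cancellation nor the integration by parts. The remaining analytic content, namely the justification of the boundary terms and the characterisation of the generator domain, is inherited verbatim from Lemma~1 of \citet{wu2019coordinate}, which is why the modification leaves the argument intact.
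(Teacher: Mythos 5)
Your proof is correct and follows essentially the same route as the paper's: integrate the generator against $\pi(x,v)=\pi(x)\varphi(v)$, integrate the transport term by parts using $\nabla_x\pi=-\nabla U\,\pi$, collapse the double jump sum via $\sum_v\lambda(x,v)=\sum_v\lambda(x,-v)=\Lambda(x)$, and cancel the two contributions with the pointwise identity $\langle v,\nabla U\rangle_+ - \langle -v,\nabla U\rangle_+ = \langle v,\nabla U\rangle$. In fact your signs are the consistent ones — the paper's closing identity $\langle -\nabla_x U, v_i\rangle + \lambda(x,v_i) + \lambda(x,-v_i) = 0$ contains a typo (it should read $\langle \nabla_x U, v_i\rangle + \lambda(x,-v_i) - \lambda(x,v_i) = 0$), which is exactly the relation you state.
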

        The proof follows the same steps as in~\citet{wu2019coordinate} with very little change besides taking into consideration the case when the velocity is fully switched off.
        \begin{proof}
            The generator of the I/O-CS is given (onto its domain) by 
            \begin{equation}\label{eq:i-o-cs-gen}
                \mathcal{L}f(x, v) = \left\langle \nabla_x f , v \right\rangle + \lambda(x, v)\sum_{i=0}^{2d} \frac{\lambda(x,-v_i)}{\lambda(x)} f(x, v_i) - \lambda(x, v) f(x, v)
            \end{equation}
            where $v_0 = 0$, $v_i = e_i$, $i=1, \ldots, d$, $v_i = -e_i$, $i=d+1, \ldots, 2d$, and $\lambda(x) = \sum_{i=0}^{2d} \lambda(x, v_i)$.

            For any $f$ in the domain of $\mathcal{L}$, we have 
            \begin{align*}
                &\int_x \int_v \mathcal{L}f(x, v) \pi(x, v) \dd{v} \dd{x}\\
                    &=\frac{1}{2d + 1} \sum_{i=0}^{2d} \int_x \mathcal{L}f(x, v_i) \pi(x) \dd{x}\\
                    &= \frac{1}{2d + 1} \sum_{i=0}^{2d} \int_x \left[\left\langle \nabla_x f , v_i \right\rangle + \lambda(x, v_i)\sum_{j=0}^{2d} \frac{\lambda(x,-v_j)}{\lambda(x)} f(x, v_j) - \lambda(x, v_i) f(x, v_i)\right] \pi(x) \dd{x} \\
                    &= \frac{1}{2d + 1} \sum_{i=0}^{2d} \int_x \left\langle \nabla_x f , v_i \right\rangle  \pi(x) \dd{x} +  \frac{1}{2d + 1} \sum_{i=0}^{2d} \sum_{j=0}^{2d} \int_x \lambda(x, v_i) \frac{\lambda(x,-v_j)}{\lambda(x)} f(x, v_j) \pi(x) \dd{x} \\
                    &\quad- \frac{1}{2d + 1} \sum_{i=0}^{2d} \int_x \lambda(x, v_i) f(x, v_i) \pi(x) \dd{x}\\
                    &= \frac{1}{2d + 1} \sum_{i=0}^{2d} \int_x \left\langle -\nabla_x U , v_i \right\rangle f(x, v_i) \pi(x) \dd{x} +  \frac{1}{2d + 1} \sum_{j=0}^{2d} \int_x \lambda(x, -v_j) f(x, v_j) \pi(x) \dd{x} \\
                    &\quad- \frac{1}{2d + 1} \sum_{i=0}^{2d} \int_x \lambda(x, v_i) f(x, v_i) \pi(x) \dd{x}\\
                    &= 0
            \end{align*}
            where the last equality follows from noticing that $\left\langle -\nabla_x U , v_i \right\rangle + \lambda(x, v_i) + \lambda(x, -v_i) = 0$ for all $x$ and $i$.
        \end{proof}

        \begin{lemma}
            The I/O-CS is $\pi$-ergodic as soon as the potential function $U$ is continuously differentiable, and all compact set is petite.
        \end{lemma}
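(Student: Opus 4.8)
The plan is to follow the ergodicity argument of \citet{wu2019coordinate} (their Theorem~4) essentially verbatim, checking at each step that introducing the null velocity $v_0 = 0$ into $\mathcal{V}_{\textrm{I/O-CS}}$ does not invalidate their reasoning. Since the previous lemma already establishes $\pi$-invariance, it remains to establish (i) non-explosivity, and (ii) $\pi$-irreducibility and aperiodicity of the process. Once these are in hand, $\pi$-ergodicity follows from the standard characterisation of positive Harris recurrent piecewise deterministic Markov processes \citep[see][]{davis1984piecewise}, combined with the assumed petiteness of all compact sets.

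First I would verify non-explosivity. Because $U$ is continuously differentiable, the rate $\lambda(x, v) = \langle v, \nabla U(x) \rangle_+ + \lref$ is continuous and hence bounded on every compact set, while the flow is either $\dot{X}_t = v$ (for $v = \pm e_i$) or $\dot{X}_t = 0$ (for $v = 0$), so a trajectory started in a compact set remains in a compact set over any finite horizon. Consequently only finitely many events occur in finite time and the process is well-defined for all $t \geq 0$. In particular, the null-velocity state carries a well-defined homogeneous clock of rate $\lref$, since $\langle 0, \nabla U(x)\rangle_+ = 0$, so the process always leaves it after an $\mathrm{Exp}(\lref)$ holding time.

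The substantive step is $\pi$-irreducibility and aperiodicity. Here I would show that the jump-chain embedded at event times can move from any starting state to a neighbourhood of $\pi$-almost any target state with positive probability: the refreshment weights $w_v \propto \lambda(x, -v)$ assign strictly positive mass to every coordinate direction $\pm e_i$ (thanks to the $+\lref$ term), so every direction is available at each jump, and a finite sequence of coordinate moves can bring the position arbitrarily close to any target. The null-velocity transitions only enlarge the set of reachable configurations, and their exponential (hence non-lattice) holding times supply aperiodicity. I would then lift irreducibility of the embedded chain to the continuous-time process via the associated resolvent, exactly as in \citet{wu2019coordinate}.

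I expect the main obstacle to be precisely this irreducibility argument: one must check that allowing $v=0$ does not create a recurrent class disjoint from the rest of the state space, i.e. that the process both enters and exits the null-velocity regime on a $\pi$-full set of positions, and that the displacement-accumulation construction of \citet{wu2019coordinate} along the coordinate axes still reaches all of $\mathbb{R}^d \times \mathcal{V}_{\textrm{I/O-CS}}$ once a ``parked'' component may be present. Given the assumed petiteness of compact sets, once irreducibility and aperiodicity are secured the conclusion is immediate from the general ergodic theorem for such processes.
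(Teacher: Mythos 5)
Your proposal is correct and takes essentially the same route as the paper: both re-run the ergodicity argument of \citet{wu2019coordinate} and check that adjoining the null velocity changes nothing qualitative, the crucial point being that the $\lref$ term keeps every velocity choice (including the exit from $v=0$, which occurs after an $\mathrm{Exp}(\lref)$ holding time) at strictly positive probability. The paper simply makes the step you flag as ``the main obstacle'' quantitative: the only change in Lemma 2 of their supplement is the probability of a prescribed $(d+1)$-sequence of velocities, which becomes $\left(\lref/\left((2d+1)\lref + 2dK\right)\right)^{d+1}$ instead of $\left(\lref/\left(2d(\lref+K)\right)\right)^{d+1}$ --- smaller, but still positive, so the conclusion stands.
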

        Again, this directly mirrors Theorem 3 in~\citet{wu2019coordinate}. Following the proof steps is however more tedious and lengthy than proving the invariance only. However, a careful reading of the proof of Lemma 2 in~\citet[Supplementary material]{wu2019coordinate} will highlight that the only change consists in, using their notation for a moment, the value of the quantity~\citep[Supplementary material, page 4]{wu2019coordinate}
        \begin{equation}
            \mathbb{P}\left(\left\{V_k = v_k^*, k=1, \ldots, d+1\right\}\right)
        \end{equation}
        which in our case is equal to $\left(\frac{\lref}{(2d + 1)\lref + 2d K}\right)^{d+1}$ rather than $\left(\frac{\lref}{2d(\lref + K)}\right)^{d+1}$ (note the typo in their proof), only making the value the constant in their lemma slightly smaller, albeit still positive.

    \section{Tractable couplings}\label{app:coupling-tractable}
        In this section, we review the description of tractable maximal couplings, in the sense that we can sample from these in closed form. Given two distributions with densities $p, q$, we have the following~\citet[][Theorem 19.1.6]{Douc2018MC}.
        \begin{theorem}[Maximal coupling representation.]\label{thm:coupling-rep}
            Let $\xi, \xi'$ be two probability distributions, and define 
            \begin{equation}
                \alpha \coloneqq \int \min(\xi, \xi')(\dd{x}) = 1 - \norm{\xi - \xi'}_{\mathrm{TV}},
            \end{equation}
            \begin{equation}
                \nu = \frac{\xi - \min(\xi, \xi')}{1 - \alpha}, \text{ and } \nu' = \frac{\xi' - \min(\xi, \xi')}{1 - \alpha}.
            \end{equation}
            Then a coupling $\gamma$ of $\xi, \xi'$ is maximal if and only if there exists an (arbitrary) coupling $\beta$ of $\nu, \nu'$ such that
            \begin{equation}
                \gamma = (1 - \alpha) \beta(\dd x, \dd y) + \alpha \frac{\min(\xi, \xi')}{\alpha}(\dd x) \delta_x(\dd{y}).
            \end{equation}
        \end{theorem}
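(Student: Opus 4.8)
The plan is to prove the two implications of the ``if and only if'' separately, after first pinning down the maximal attainable diagonal mass. I would begin by fixing a common dominating measure $m$ (for instance $m = \xi + \xi'$), so that $\min(\xi, \xi')$ is unambiguously the measure with density $\min(\dd\xi/\dd m, \dd\xi'/\dd m)$, and record the elementary identities $\xi - \min(\xi,\xi') = (1-\alpha)\nu \geq 0$ and $\xi' - \min(\xi,\xi') = (1-\alpha)\nu' \geq 0$, together with the key fact that $\nu$ and $\nu'$ are mutually singular (their densities are supported on the disjoint sets $\{\dd\xi/\dd m > \dd\xi'/\dd m\}$ and $\{\dd\xi'/\dd m > \dd\xi/\dd m\}$). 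The degenerate case $\alpha = 1$, i.e.\ $\xi = \xi'$, is disposed of immediately: there $1-\alpha = 0$, the only maximal coupling is the identity coupling $\xi(\dd{x})\delta_x(\dd{y})$, and the statement holds trivially. For the coupling inequality, given \emph{any} coupling $\gamma$ I would push the restriction of $\gamma$ to the diagonal forward under $(x,y)\mapsto x$ to obtain a measure $\mu_{\mathrm{diag}}$; since $\mu_{\mathrm{diag}}(A) = \gamma(\{X = Y\}\cap\{X\in A\}) \leq \gamma(\{X\in A\}) = \xi(A)$ and likewise $\leq \xi'(A)$ for every measurable $A$, one gets $\mu_{\mathrm{diag}} \leq \min(\xi,\xi')$ as measures, hence $\gamma(X=Y) = \mu_{\mathrm{diag}}(\mathcal{X}) \leq \alpha$. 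This shows $\alpha$ is the largest possible diagonal mass, so ``maximal'' means exactly $\gamma(X=Y) = \alpha$.

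For the ``if'' direction, I would take $\gamma$ of the stated form and verify two things. First, its marginals: integrating the second piece over $y$ returns $\min(\xi,\xi')(\dd{x})$ and the first piece contributes $(1-\alpha)\nu(\dd{x})$, whose sum is $\xi$ by the identities above; symmetrically the second marginal is $\xi'$. Second, its diagonal mass is exactly $\alpha$: the term $\alpha\,\tfrac{\min(\xi,\xi')}{\alpha}(\dd{x})\,\delta_x(\dd{y})$ places all of its mass $\alpha$ on $\{x = y\}$, while the term $(1-\alpha)\beta$ places none, because $\beta$ couples the mutually singular measures $\nu, \nu'$ and therefore assigns zero mass to the diagonal. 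Combined with the inequality already established, $\gamma(X=Y)=\alpha$ is maximal.

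For the ``only if'' direction, I would assume $\gamma(X=Y) = \alpha$ and again form $\mu_{\mathrm{diag}}$. The argument of the coupling inequality gives $\mu_{\mathrm{diag}} \leq \min(\xi,\xi')$, but now both measures have the \emph{same} total mass $\alpha = \int\min(\xi,\xi')$; a nonnegative measure dominated by another of equal total mass must equal it, so $\mu_{\mathrm{diag}} = \min(\xi,\xi')$. Consequently the diagonal part of $\gamma$ is precisely $\alpha\,\tfrac{\min(\xi,\xi')}{\alpha}(\dd{x})\,\delta_x(\dd{y})$. Setting $\beta \coloneqq (\gamma - \gamma_{\mathrm{diag}})/(1-\alpha)$, its marginals are $(\xi - \min(\xi,\xi'))/(1-\alpha) = \nu$ and $\nu'$, so $\beta$ is a coupling of $\nu, \nu'$ and $\gamma$ has exactly the claimed representation.

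The main obstacle I expect is the measure-theoretic bookkeeping around the diagonal rather than any hard inequality: one must ensure the diagonal is measurable (which holds on Polish or at least standard Borel spaces, as is implicit here), that the ``restriction to the diagonal'' and its projection $\mu_{\mathrm{diag}}$ are well defined, and, most importantly, that the domination $\mu_{\mathrm{diag}} \leq \min(\xi,\xi')$ is established at the level of measures on every set rather than merely for total masses, since the equal-mass-forces-equality step relies on it. Everything else reduces to the elementary algebra of the Jordan-type decomposition $\xi - \min(\xi,\xi')$ and the mutual singularity of the residuals.
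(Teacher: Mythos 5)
Your proof is correct. Note, however, that the paper does not prove this statement at all: it is quoted as an external result, namely \citet[Theorem 19.1.6]{Douc2018MC}, so there is no internal proof to compare against. Your argument is the standard one and essentially reproduces the textbook derivation: (i) the coupling inequality $\gamma(\{x=y\})\le\alpha$, obtained by pushing the diagonal restriction of $\gamma$ forward to $\mu_{\mathrm{diag}}$ and checking the set-wise domination $\mu_{\mathrm{diag}}\le\min(\xi,\xi')$ (your decomposition over $\{\dd\xi/\dd m\le \dd\xi'/\dd m\}$ and its complement is exactly what makes this rigorous); (ii) the ``if'' direction via the mutual singularity of the residuals $\nu,\nu'$, which forces $\beta$ to put zero mass on the diagonal; and (iii) the ``only if'' direction via the equal-total-mass argument $\mu_{\mathrm{diag}}\le\min(\xi,\xi')$ with $\mu_{\mathrm{diag}}(\mathcal{X})=\alpha$, which forces equality and yields the residual coupling $\beta=(\gamma-\gamma_{\mathrm{diag}})/(1-\alpha)$. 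The measure-theoretic caveats you flag (measurability of the diagonal, identification of a measure concentrated on the diagonal with the pushforward of its projection under $x\mapsto(x,x)$) are genuine but are settled on standard Borel spaces, which is the implicit setting here; your treatment of the degenerate case $\alpha=1$ (and implicitly $\alpha=0$, where the diagonal term vanishes) is also fine.
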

        Otherwise said, a coupling is maximal if it is an optimal mixture between $\frac{\min(\xi, \xi')}{\alpha}$ and a coupling on the marginal residuals.
        This means that, for a choice of $\beta$, if we know how to compute $\alpha$ and how to sample from $\min(\xi, \xi')/\alpha$ as well as $\beta$, then we can sample from a maximal coupling of $\xi$ and $\xi'$. 

        \subsection{Coupling shifted exponential distributions}\label{app:coupled-shift}
            For example, when $\xi$ and $\xi'$ are shifted exponential distributions with the same rate, i.e., when $\Xi$ and $\Xi'$ are the corresponding random variables verifying 
            \begin{equation}
                \Xi - \mu \sim \mathrm{Exp}(\lambda), \quad \Xi' - \mu' \sim \mathrm{Exp}(\lambda),
            \end{equation}
            we can compute all the necessary terms explicitly. In this case, and without loss of generality assuming $\mu'=0$ and $\mu > 0$, we have
            \begin{equation}
                \min(\xi, \xi')(x) = \mathbbm{1}_{[\mu, \infty)}(x) \exp(-\lambda x) / \lambda, \quad \alpha = \exp(-\lambda \mu),
            \end{equation}
            so that $\min(\xi, \xi')(x) / \alpha$ has an inverse cumulative density function defined by
            \begin{equation}
                u \mapsto \mu - \frac{\log u}{\lambda},
            \end{equation}
            and the residuals $\nu$ and $\nu'$ have one two, i.e.
            \begin{equation}\label{eq:residual-exponentials}
                v \mapsto -\frac{\phi(v)}{\lambda} \text{ and } w \mapsto -\frac{\psi(w)}{\lambda},
            \end{equation}
            for $\phi(v) \coloneqq \log(\alpha (1-v))$ $\psi(w) \coloneqq \log(1 - (1 - \alpha) w)$,
            respectively. A more general version of this, for exponentials with different rates, can be found in \citet[Supplementary G]{corenflos2022rejection}.

            Coupling the two shifted exponentials then corresponds to a choice of coupling for \eqref{eq:residual-exponentials}, for instance independent, common random number, or antithetic couplings are possible. These respectively correspond to using $v, w \sim \mathcal{U}([0, 1])$, independent, $v=w \sim \mathcal{U}([0, 1])$, or $1 - v = w \sim \mathcal{U}([0, 1])$ in \eqref{eq:residual-exponentials} to simulate from $\beta$.
            
            In Section~\ref{app:coupling-poisson}, we describe the impact of these choices on the resulting behaviour of two $\Delta$-coupled Poisson processes.
            
        \subsection{Illustration}\label{app:coupling-poisson}
            An equivalent formulation of the coupling of Section~\ref{app:coupled-shift} consists in coupling two standard exponentials distributions with the same rate $\lambda$, but trying to maximise $\mathbb{P}(\Xi' + \mu - \Xi  = 0)$ for a given shift variable $\mu > 0$. This is more in line with our definition of the $\Delta$-coupling, and we take this perspective here. In order to understand the different qualitative behaviours of the residual couplings, we consider the following example: $\lambda = 5$ and $\mu = 0.5$. The different choices (independent, common random number, and antithetic) for the residual coupling are illustrated in Figure~\ref{fig:coupling-expong}.
            \begin{figure}[htb]
                \centering
                \includegraphics[width=0.75\textwidth]{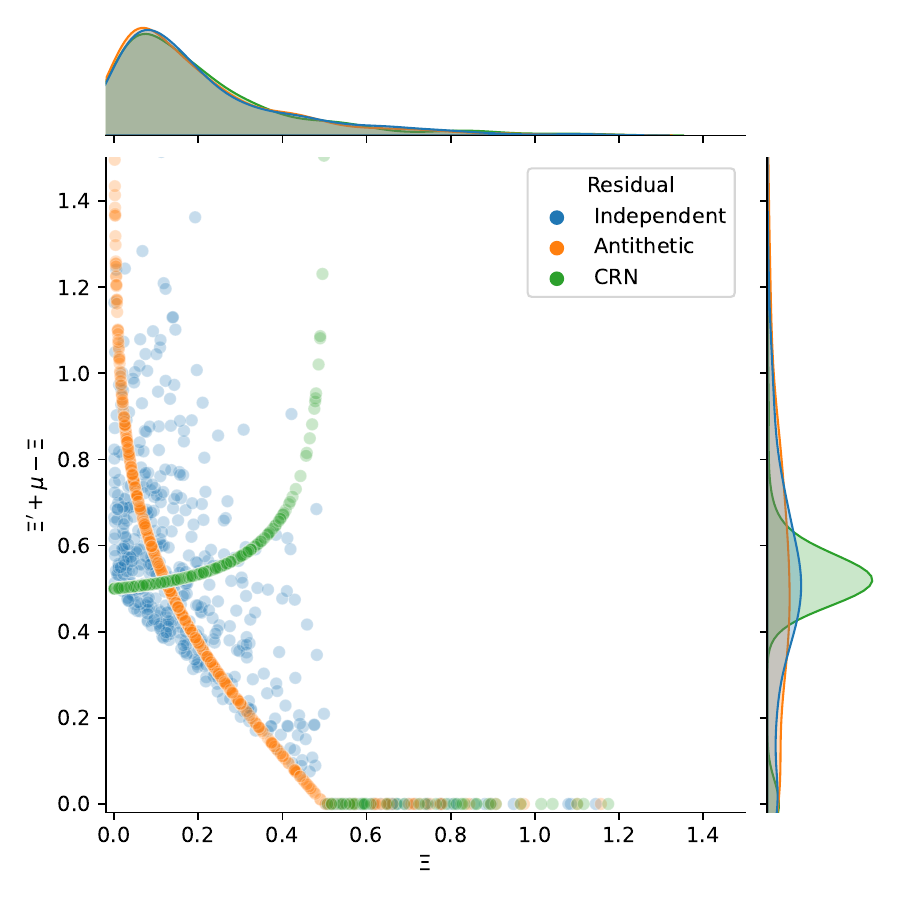}
                \caption{Scatter plot of the coupling for $\Xi' + \mu - \Xi$ and $\Xi$ for different choices of $\beta$ in Theorem~\ref{thm:coupling-rep}.
                The three different choices of residual coupling have the same behaviour ``when the marginals are coupled'', i.e., right of $0.5$ on the x-axis, but result in staunchly different behaviours otherwise.}
                \label{fig:coupling-expong}
            \end{figure}
            As can be seen in the illustration, when using common random numbers for the residuals, $\Xi' + \mu - \Xi$ will either be $0$ (when coupling happens) or greater than $0.5$. In practical terms, this means that Poisson processes coupled using these will not exhibit ``contractive'' behaviour in a geometric sense, and the time distance between the two processes will only ever decrease if coupling happens at all. On the other hand, using antithetic or independent couplings for the residuals results in ``smoother'' transitions between the coupled and uncoupled regimes, resulting in processes that will contract in time, particularly so for the antithetic coupling. 
            This behaviour is similar to the Gaussian case, where reflected residuals in Algorithm~\ref{alg:reflection-maximal} are preferable to independent or common random number ones and is the reason why we implement antithetic residual couplings of exponential distributions whenever possible.

    \section{Proofs}\label{app:proofs}

First, we remark that it is sufficient to establish a similar property
for the coupling of the two discretized Markov chains $\theta_{k}^{i}:=Z_{k\Delta}^{i}$,
$i=1,2$, and their coupling time $\tau:=\lceil\kappa/\Delta\rceil$. 
Indeed, assume that $\mathbb{P}(\tau>k)\leq a\exp(-bk)$,
then 
\[
\mathbb{P}(\kappa>t) \leq\mathbb{P}(\kappa>\Delta\lfloor t/\Delta\rfloor)
  \leq a\exp\left\{ -b\Delta(t/\Delta-1)\right\} 
  \leq a'\exp(-bt)
\]
with $a'=a\exp(b\Delta)$. From now on, we let $M=P_{\Delta}$ (the
Markov kernel associated with these Markov chains), and $\bar{M}$ their
coupling kernel (the distribution of $(\theta_{k}^{1},\theta_{k}^{2})$
given $(\theta_{k-1}^{1},\theta_{k-1}^{2})$). 

Second, this property for coupling of discrete Markov chains is shown
to hold in Proposition 4 (called Proposition 3.4 in the supplement)
of~\citet{Jacob2020unbiasedMCMC}, under the following assumptions: 
\begin{enumerate}
\item Kernel $M$ is $\tilde{\pi}-$invariant, $\varphi-$irreductible and
aperiodic.\label{point-irreducible}

\item There exists a measurable function $V:\mathcal{Z}\rightarrow[1,+\infty)$, such
that 
\begin{equation}
MV(x)\leq\lambda V(x)+b\mathbf{1}_{\mathcal{C}}(x)\label{eq:drift}
\end{equation}
where $\lambda\in(0,1),$ $b>0$, $MV(x):=\int M(x,dy)V(y)$, and
$\mathcal{C}$ is a small set of the form $\mathcal{C}=\left\{ z:V(z)\leq L\right\} $
with $L$ large enough so that $\lambda+b/(1+L)<1$. 
\label{point-drift}
\item There exists $\varepsilon\in(0,1)$ such that 
\[
\inf_{z,z'\in\mathcal{C}}\bar{M}\left((z,z'),\mathcal{D}\right)\geq\varepsilon
\]
where $\mathcal{D}=\left\{ (z,z')\in\mathcal{Z}^{2}:z'=z\right\} $
is the ``diagonal'' of $\mathcal{Z}^{2}$. 
\label{point-coupling}
\end{enumerate}

Regarding point~\ref{point-irreducible}, the $\tilde{\pi}$-invariance of $P_{t}$ (and
therefore $M$) was established by~\citet{durmus2021piecewise}. 
The BPS is also $\varphi-$irreductible and aperiodic as soon as $\lref$ is positive~\citep[Theorem 1]{Bouchard2018BPS}. 

Regarding point~\ref{point-drift}, Corollary 9 in~\citet{durmus2020geometric} implies that (under
appropriate conditions, see comments below) the semigroup $P_{t}$
of the BPS is such that
\begin{equation}
P_{t}V(z)\leq V(z)e^{-A_{1}t}+A_{2}(1-e^{-A_{1}t})\label{eq:durmus_cor9}
\end{equation}
for certain constants $A_{1}$, $A_{2}$, and a certain drift function
$V$ (see below again). Since $L$ may be taken arbitrarily large, take it
so that $L\geq2A_{2}$. Then, taking $\mathcal{C}=\left\{ z:V(z)\leq L\right\} $,
we have that, for $z\notin\mathcal{C}$, $V(z)>L$, and (\ref{eq:durmus_cor9})
implies that $P_{t}V(z)$

\[
P_{t}V(z)\leq\lambda_{t}V(z),\quad\text{with }\lambda_{t}:=e^{-A_{1}t}+(1-e^{-A_{1}t})/2<1,
\]
and for $z\in\mathcal{C}$, we simply have 
\[
P_{t}V(z)\leq\lambda_{t}V(z)+A_{2}
\]
 so (\ref{eq:drift}) holds if we take $t=\Delta$, $b=A_{2}$, and
$\lambda=\lambda_{t}$. 

Note that the drift function $V$ proposed in~\citet{durmus2020geometric} (their
Eq. 22) is fairly complicated, because it is designed to cover simultaneously
alternative sets of assumptions (A1, A2 and either A3, A4, A5, A6
or A7 in their paper). If we focus on the case where the distribution for
refreshing velocities is Gaussian (as described in
Algorithm~\ref{alg:bps-summary-2}), and therefore unbounded, we may focus on
Assumptions A1, A2 and A7, and  
this drift function may be simplified to $V(z)=\exp\left(U^{c}(x)\right)+\exp(\eta\|v\|^{2})$
for a small $\eta>0$, with $z=(x,v)$. What is left to establish
then is that $\mathcal{C}=\left\{ z:V(z)\leq L\right\} $ is indeed
a small set. Lemma 10 in~\citet{durmus2020geometric} states that any compact set
$\mathcal{K}\subset\mathcal{Z}$ is small for the BPS; in particular,
$\mathcal{C}=\{z:V(z)\leq L\}$ is small, since we assume that $U$
is a continuous function which diverges at infinity. 

Regarding point~\ref{point-coupling}, Lemma 14 in~\citet{durmus2020geometric} shows that (under the
same assumptions), for a given compact
set $\mathcal{K}\subset\mathcal{Z}$, there exists $\alpha>0$, and
$t_{0}>0$ such that, for all $t\geq t_{0}$
\[
\|P_{t}(z,\cdot)-P_{t}(z',\cdot)\|_{\mathrm{TV}}\leq2(1-\alpha).
\]
This result is obtained by designing the same coupling construction
that we use (in fact, because we time-synchronise bouncing events too, our coupling probability can only be greater than theirs), and establishing that, under this coupling:
\[
\bar{P}_{t}\left((z,z'),\mathcal{D}\right)\geq\alpha,\quad\forall z,z'\in\mathcal{K}.
\]
We can conclude by remarking again that a compact set is a small set
for the BPS (Lemma 10 in~\citet{durmus2020geometric}, Lemma 2 in~\citet{deligiannidis2019exponential}), and that the specific small
set we consider, $\mathcal{C}=\left\{ z:V(z)\leq L\right\} $ is compact.
Then, we can take $t=\Delta$, large enough so that $\Delta\geq t_{0}=\Delta'$.

    \section{Further recommendations in tuning $\Delta$}\label{app:tuneDelta}

    Here we provide some additional points on tuning the time-delay parameter, $\Delta$, for the coupled process. Recall that the time-delay ensures that the processes are advanced so that they are coupled in time every $\Delta$ increment. We begin by looking at the sensitivity of the average coupling time by running 100 independent simulations from the different coupling kernels on an 8-dimensional standard Normal distribution starting in stationarity. Figure \ref{fig:GaussDelta} shows the results. 
    
    First, note that the time-delay has a purely negative effect on the coupling time for the Boomerang. Since the boomerang sampler is moving according to the Hamiltonian dynamics on the Gaussian it never has a bounce event and so will always be coupled in time. Consequently, $\Delta$ is only a hindering effect for the Boomerang. We therefore recommend using a larger $\Delta$ when the Boomerang reference distribution is close to the target.
            \begin{figure}[!htp]
            \centering
            \includegraphics[width=.6\textwidth]{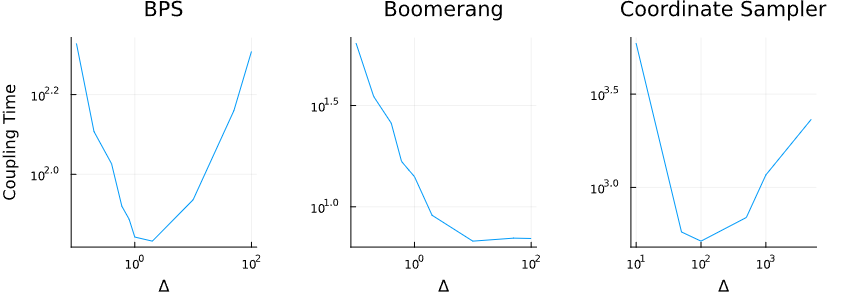}
            \caption{Tuning $\Delta$ for coupling efficiency in the Gaussian distribution scaling example. Plots show the average meeting time $\kappa$ for coupled BPS, coupled Boomerang and the coupled Coordinate Sampler when sampling $N(0_8, I_8)$.}
            \label{fig:GaussDelta}
        \end{figure}
    In comparison, for both the BPS and Coordinate Sampler the process may uncouple in time and a suitable $\Delta$ helps reduce the overall coupling time. For the coordinate sampler we find that scaling $\Delta$ with the dimension, $d$, is necessary to attain good coupling times. Intuitively, the time the process needs to couple $d$ dimensions when one dimension is moved at a time should scale with $d$ so this scaling of $\Delta$ appears sensible. For the BPS sampler we have found that for higher dimensions a smaller $\Delta$ is typically needed. In higher dimensions, the BPS will need to bounce more frequently to move the same distance, and consequently there will be a larger probability that the process will uncouple in time. Similarly, if the process has a poor thinning bound it may be more likely that it will uncouple in time when bouncing, so a smaller choice may be necessary.

\end{document}